\documentclass[letterpaper,10pt,twocolumn]{IEEEtran}

\usepackage{color}

\usepackage{setspace}
\usepackage{comment}
\usepackage{url}

\usepackage{graphicx}
\usepackage{amssymb}
\usepackage{amsmath,amsfonts,amssymb,euscript,epsfig,psfrag,amsthm,enumerate,float,afterpage,subfigure}

\ifcsname appendices\endcsname

\else
  \usepackage{appendix}
\fi

\setlength{\textwidth}{6.7in}
\setlength{\textheight}{9.2in}
\setlength{\oddsidemargin}{-0.05in}
\setlength{\evensidemargin}{-0.05in}
\setlength{\topmargin}{-0.5in}
\setlength{\columnsep}{0.1in}

\newtheorem{thm}{Theorem}

\newtheorem{lem}{Lemma}

\usepackage{cite}

\usepackage{rotate}
\usepackage{verbatim}
\usepackage{multirow}
\usepackage{wrapfig}
\usepackage{setspace}
\usepackage{subfig}
\usepackage{multirow}

\begin{document}

\title{\bf \LARGE Data Center Cost Optimization Via Workload Modulation Under Real-World Electricity Pricing}

\author{Cheng Wang, Bhuvan Urgaonkar, Qian Wang, George Kesidis, and Anand Sivasubramaniam\\
Departments of CSE, EE, and MNE\\
The Pennsylvania State University}

\date{}

\maketitle

\singlespacing

\begin{abstract}
We formulate optimization problems to study how data centers might modulate their power demands for cost-effective operation taking into account three key complex features exhibited by real-world electricity pricing schemes: (i) time-varying prices (e.g., time-of-day pricing, spot pricing, or higher energy prices during "coincident" peaks) and (ii) separate charge for peak power consumption. Our focus is on demand modulation at the granularity of an entire data center or a large part of it. For computational tractability reasons, we work with a fluid model for power demands which we imagine can be modulated using two abstract knobs of demand dropping and demand delaying (each with its associated penalties or costs). Given many data center workloads and electric prices can be effectively predicted using statistical modeling techniques, we devise a stochastic dynamic program (SDP) that can leverage such predictive models. Since the SDP can be computationally infeasible in many real platforms, we devise approximations for it. We also devise fully online algorithms that might be useful for scenarios with poor power demand or utility price predictability. For one of our online algorithms, we prove a competitive ratio of $2-\frac{1}{n}$. Finally, using empirical evaluation with both real-world and synthetic power demands and real-world prices, we demonstrate the efficacy of our techniques. As two salient empirically-gained insights: (i) demand delaying is more effective than demand dropping regarding to peak shaving (e.g., 10.74\% cost saving with only delaying vs. 1.45\% with only dropping for Google workload) and (ii) workloads tend to have different cost saving potential under various electricity tariffs (e.g., 16.97\% cost saving under peak-based tariff vs. 1.55\% under time-varying pricing tariff for Facebook workload).
\end{abstract}

\vspace{-0.1in}
\section{Introduction}

It is well-known that data centers incur significant costs towards powering and cooling their computing, networking, and storage (IT) equipment, making the optimization of these costs an important research area~\cite{FanWB07,RenHX12}. A particularly significant contributor to the difficulty of such decision-making is the {\em complexity of real-world electric utility pricing schemes}. Electric utilities employ a variety of pricing mechanisms to disincentivize consumers from posing high or unpredictable power demands (especially simultaneously) or to incentivize consumers to pose predictable demands. Most of these mechanisms employ one (or a combination)  of the following:  (i) {\em peak-based pricing}, wherein a component of the electricity bill is dependent on the peak power drawn over the billing cycle (typically a month)~\cite{duke}, (ii) {\em time-varying pricing}, wherein the per unit energy price fluctuates over time (there are many examples of this operating at different time granularity such as very fine time-scale spot prices~\cite{tou,sceg} or higher prices during periods of ``coincident peaks'' experienced by the utility~\cite{coincidentPeak}), and (iii) {\em load-dependent or tiered pricing}, wherein higher energy prices are applied for higher power demands~\cite{sdgeTier,oebTier,idahoTier}.

Whereas recent research has begun looking at data center cost optimization bearing some of these pricing complexities in mind, this is still a nascent field in many ways. In particular, most instances of such work focus exclusively on a particular  pricing feature (e.g., peak-based~\cite{govindan-asplos2012}, coincident peak pricing~\cite{liu2013}, or real-time prices~\cite{urgaonkar-sigmetrics2011}), whereas real-world tariffs often combine these features (e.g., ~\cite{duke} combines time-of-day with peak pricing and ~\cite{coincidentPeak} considers tiered pricing together with coincident peak), making cost-effective data center operation even more complicated. This is the context of our paper: {\em how should data centers optimize their costs given the various features of real-world electricity pricing?} An important motivation for our study of data center operation under realistic descriptions/models of pricing schemes arises from its potential in informing how certain (particularly big) data centers might  negotiate suitable pricing structures with their electric utility, a question that has begun to receive attention recently~\cite{mama13}.

Broadly speaking, the vast literature on data center power cost optimization may be understood as employing one or both of (i) demand-side (based on modulating the data center's power demand from within) and/or (ii) supply-side (based on employing additional sources of energy supply the data center's existing utility provider and backup sources) techniques.  
Our focus is on techniques based on ``IT knobs,'' a subset of (i) that relies upon software/hardware mechanisms to change the power consumption  of the IT machinery (servers, storage, networking) within the data center (and perhaps indirectly its cooling-related power consumption). Other demand-side modulation based on local generation capabilities or energy storage (including work by some of the co-authors)~\cite{govindan-isca2011} and a large body of work that has emerged on supply-side (also including work by the co-authors)~\cite{RenWUS12} are complementary to our work, and  cost-effective operation using these techniques is interesting future work.


We make the following research contributions:
\begin{itemize}
\item {\em Problem Formulation}: We formulate an optimization problem for a data center that wishes to employ IT knobs for power demand modulation for cost-effective operation. Our key novelty over related work in this area is our incorporation of various features of real-world electricity pricing schemes into a single unified formulation. A second important feature of our formulation is our general representation of the myriad power vs. performance/revenue trade-offs for diverse data center workloads via two power demand modulation knobs, namely demand dropping and demand delaying (with associated penalties or costs).
\item {\em Algorithm Design}: Given that power demands and electricity prices can exhibit uncertainty, besides considering offline algorithms, we devise a stochastic dynamic program (SDP) that leverages predictive workload and price models. We also devise approximations for our SDP that might be useful for scenarios where the exact SDP proves computationally intractable. Finally, for scenarios with poor input predictability, we also explore fully online algorithms, and prove a competitive ratio of $2-\frac{1}{n}$ for one of them.
\item {\em Empirical Analysis}: We evaluate the efficacy of our algorithms in offering cost-effective operation for a variety of workloads derived from real-world traces. Our results help us understand when and why demand modulation is (or is not) useful for cost-efficacy. We find that our stochastic control techniques offer close-to-optimal cost savings when the workloads exhibits strong time-of-day patterns; our online algorithms performs well even when there is unpredictable flash crowd in the workload.
\end{itemize}

The rest of this paper is organized as follows. 
In Section~\ref{sec:back}, we discuss the scope and context of our work, key assumptions,  and its relation with prior work. In Section~\ref{sec:problem}, we present our offline problem formulation. In Section~\ref{sec:stochastic}, we present our stochastic control algorithms. In Section~\ref{sec:online}, we present online algorithms that might be suitable for scenarios with poor predictability. In Section~\ref{sec:eval}, we present our empirical evaluation. Finally, we present concluding remarks in Section~\ref{sec:conclus}.

\vspace{-0.1in}
\section{Background and Related Work}
\label{sec:back}

\subsection{Context and Key Assumptions}
\label{sec:assump}
We consider our problem at the granularity of an entire data center or a ``virtual'' data center (i.e., a subset of the data center whose IT resources are dedicated to a ``tenant'' application that is allowed to carry out its own power cost optimization). 
For reasons of computational feasibility, we make two key assumptions/simplifications. First, we imagine power demands as being ``fluid,'' whereas, in practice, power cost optimization must deal with discrete resource allocations and software job characteristics. 
Second, we choose to capture the IT power/performance/cost trade-offs via two abstract knobs that we consider applying directly to the power demand input: demand dropping and demand delaying. We assume that we can leverage existing or future work on translation mechanisms between our fluid power demand and its modulation via delaying/dropping (on the one hand), and actual IT resources, their control knobs, and software characteristics and performance (on the other hand). 

We employ convex non-decreasing functions to model the loss due to demand delaying ($l_{\sf delay}(demand,delay)$) and dropping ($l_{\sf drop}(demand)$) based on evidence in existing work~\cite{van1995dynamic, freris2010resource}. We denote by $\tau$ the delay tolerance for a unit of delayed power demand. 
 We offer three workload scenarios (whose analogues we evaluate in Section~\ref{sec:eval}) as concrete examples of the validity/suitability of our abstractions of demand dropping and delaying:
\begin{enumerate}
\item {\bf Example 1:} A Web search application with partial execution for meeting response time targets~\cite{XuL2013} is an example of a workload that is delay-sensitive but can shed some of its power needs (relative to that corresponding to the best quality results) to meet delay targets. Another example is a video server that exploits multiple fidelity videos that MPEG allows to guarantee fluent streaming~\cite{freris2010resource}.
\item {\bf Example 2:} For several batch workloads (e.g., a MapReduce-based set of tasks~\cite{yao2012data}), dropping is intolerable but delaying demand as long as it finishes before a deadline is acceptable. 
\item {\bf Example 3:} Some applications can have combinations of the above two. E.g., a search engine typically has a user-facing front-end that is delay-sensitive as well as a backend crawler that is delay-tolerant~\cite{teregowda2010citeseerx}.
\end{enumerate}

In fact, besides the above scenarios, many other systems can also employ various IT knobs (e.g., redirecting requests to other data centers) to save costs, and these can be effectively translated into corresponding delaying and dropping related parameters. The abstractions of demand delaying and dropping help us keep our formulations computationally tractable. 

\vspace{-0.1in}
\subsection{Related Work}
\label{sec:related}
Related work in this area is vast and we discuss representative efforts in the most close research topics.
A large body of work exploits the use of demand-side IT knobs (e.g., Dynamic Voltage-Frequency Scaling (DVFS), admission control, scheduling, migration, etc.). Wang et al.  study a cluster-level power control problem via DVFS based on feedback control~\cite{WangC08}. Lu et al.  reduce idle power costs by dynamically turning off unnecessary servers~\cite{LuCA2013}.  Zhou et al. develop online techniques to make decisions on request admission control, routing and maximize operating profit as well as minimize power consumption~\cite{zhou2013arbitrating}. Closely related to our work, Liu et al. minimize operational costs (together with delay costs) by redirecting requests among geographically located data centers. In face of the load-dependent/location-dependent energy pricing tariff~\cite{LiuLWLA11}. Xu et al. consider optimizing the electricity costs under peak-based pricing by exploiting the ability to partially execution service requests~\cite{XuL2013}. This is very closely related to our problem - partial execution may be one way for a data center to {\em drop} part of its power demand. However, they assume perfect knowledge of demand within a planning period of only one day (billing cycles for peak-based tariff are typically a month). Zhao et al.  also solve the peak-minimizing problem but in the context of EV charging via rescheduling of charging jobs~\cite{zhaopeak}. Finally, Liu et al.  develop online algorithms to avoid drawing power during coincident peaks, when the utility imposes a steep energy cost~\cite{liu2013}. However, coincident peaks occur due to high demands imposed on the utility by a {\em collection} of its customers, whereas the data center's internal peak power consumption - which we are interested in under peak-based tariff - may not occur at the same time as the coincident peaks. All these prior research works either focus on a particular pricing tariff, or one specific control knob, while in this paper we seek the opportunity to optimize costs via various workload modulation knobs under different real-world electricity pricing tariffs. 

A second line of work studies cost optimization using non-IT knobs for demand-side modulation. As one example, recent works have  demonstrated the efficacy of energy storage devices such as uninterruptible supply unit (UPS) batteries in reducing the electricity costs of data centers under peak-based pricing~\cite{govindan-isca2011,govindan-asplos2012,wang-sigmetrics2012}. Similar results have also been shown for home consumers~\cite{barker2012smartcap}. However, while these works are based on offline formulations, our interests are in decision-making in the face of uncertainty.  Bar-Noy et al. develop online algorithms with known competitive ratios for peak minimization using batteries~\cite{barnoy2008,barnoy2008a}. We find this highly complementary to the online algorithms we develop in Section~\ref{sec:online} with the key difference being that their control knobs are based on using the battery while ours are based on demand modulation. Ven et al. consider a stochastic version of the problem of cost minimization using batteries and develop a Markov Decision Process (MDP) as their approach~\cite{hegde11}. This is similar in approach to our stochastic control in Section~\ref{sec:sdp}. However, there are two important differences. First, we handle peak-based and time/load-dependent prices in a unified SDP, while their work focusses only on time-varying prices (for which the DP is easier to cast). Secondly, we also address the scalability problems that our SDP might pose and devise approximation algorithms to overcome them. Urgaonkar et al. develop an  online Lyapunov optimization based technique to minimize operational cost under time-varying electricity tariff~\cite{urgaonkar-sigmetrics2011}. This approach is very appealing, especially since it does not require any prior knowledge of future information. Unfortunately, to the best of our knowledge, it is only useful to derive algorithms that asymptotically optimize objectives based on averages. The applicability of this approach to the peak-based elements within real-world pricing schemes is unclear.

Finally, growing interest has appeared in power management of smart grid and data centers, employing supply-side knobs, such as renewable energy, diesel generator, etc. Researchers have been looking at how to reduce carbon emission and electricity bill of data centers at the same time. ~\cite{RenWUS12} develops offline algorithm to evaluate the cost saving potential of introducing on-site/off-site renewable energy for data centers. ~\cite{deng2013multigreen} and ~\cite{LuTCCL2013} consider minimizing the long-term supply cost under time-varying pricing tariff and develop algorithms to schedule the online generation of renewable energy. Such supply-side techniques are complementary to our research.

\section{Problem Formulation}
\label{sec:problem}


\vspace{-0.1in}
\subsection{Notation}

%
%

\begin{wrapfigure}[11]{r}{0.25\textwidth}
  \centering
  \includegraphics[width=0.23\textwidth]{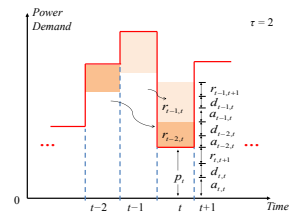}
  \caption{\small An illustration of decision variables in our formulation.}
\label{fig:decisionVar}
\end{wrapfigure}

\noindent{\bf Input Parameters:} We consider a discrete-time model wherein control decisions are made at the beginning of {\em control windows} of equal duration $\delta$. We denote by $T$ the number of such control windows within a single billing cycle, which constitutes our {\em optimization horizon}. 
Let the time-series $\{p_t: 1 \le t \le T\}$ denote the power demand of the (possibly virtual, as described in Section~\ref{sec:back}) data center over the optimization window of interest, with $0 \le p_t \le p_{\max}$ denoting its power demand during the $t^{\rm th}$ control window. 
Here $p_{\max}$ denotes the maximum power demand the data center may pose during a single control window and would depend on the data center's workload, its IT/cooling infrastructure, and the power delivery infrastructure provisioned by the data center. To represent the time-varying pricing employed in many electric tariffs (e.g.,~\cite{tou,sceg}), we define 
a time-varying energy price of $\alpha_t$ dollars per unit energy. To represent peak-based charging (e.g.,~\cite{duke,govindan-isca2011,XuL2013}), we define a peak price of $\beta$ dollars per unit peak power per billing cycle. 
Finally, our notation and assumptions for demand modulation costs are as introduced in Section~\ref{sec:back}. Table~\ref{tbl:params} in Section~\ref{sec:eval} summarizes the values we use for these parameters in our evaluation. 


\noindent{\bf Decision Variables:} Given the possibility of delaying portions of its demand, at the beginning of control window $t$ (or simply $t$ henceforth), the data center can have ``residual'' demands from the past (in addition to the newly incoming demand $p_t$). Given that demands may not be delayed by more than $\tau$ windows, these residual demands may have originated in (i.e., first postponed during) the set of windows $h(t) = \{\max\{1,t-\tau\},..., t-1\}$. We denote these residual demands as the set $\mathcal{R}_t=\{r_{i,t}\}_{i \in h(t)}$. We denote the peak demand admitted in any window during $[1, ..., T]$ as $y_{\rm max}$. The control actions to be taken by the data center during $t$ involve admitting, postponing, and dropping portions of $\mathcal{R}_t$ and $p_t$. We denote as $a_{i,t}$ ($i \in h(t)$) the portion of residual demand $r_{i,t}$ that is admitted during $t$; let $a_{t,t}$ denote the portion of $p_t$ that is admitted during $t$ (i.e., without incurring any delay). We denote as $h^+(t)$ the set $h(t)\cup \{t\}$. Let $a_t^+=\sum_{i \in h^+(t)} a_{i,t}$ denote the total demand admitted during $t$. Similarly, we denote as $d_{i,t}$ ($i \in h(t)$) the portion of residual demand $r_{i,t}$ that is dropped during $t$; let $d_{t,t}$ denote the portion of $p_t$ that is dropped during $t$ (i.e., without incurring any delay). Finally, we denote as $\mathcal{A}_t$ the set $\{a_{i,t}\}_{i \in h^+(t)}$, and as $\mathcal{D}_t$ the set $\{d_{i,t}\}_{i \in h^+(t)}$ of decision variables. Figure~\ref{fig:decisionVar} helps explain our notation using $\tau=2$.


\vspace{-0.1in}
\subsection{Offline Decision Making}
\label{sec:offline}


\noindent{\bf Objective:} We choose as our data center's objective the minimization of the sum of its utility bill and any revenue loss resulting from demand modulation:

\noindent $\mathcal{O}(\{\mathcal{A}_t\},\{\mathcal{D}_t\}) \stackrel{\text{def}}{=} \beta y_{\rm max} + \sum_t \Big(\alpha_t a_t^+  + \sum_{i \in h^+(t)} l_{\sf drop}(d_{i,t}) +  \sum_{i \in h(t)} l_{\sf delay}(a_{i,t},t-i)\Big)$.   



Notice that our objective does not have an explicit revenue term which might give the impression that it does not capture the data center's incentive for admitting demand. It is important to observe that the incentive for admitting demand comes from the costs associated with dropping or delaying demand. Generally, one would have $l_{\sf drop}(a) > \alpha_t(a)\cdot a$ to disallow scenarios where the data center prefers dropping all demand to admitting it. However, there can be situations (e.g., extremely high energy prices during coincident peaks~\cite{coincidentPeak,liu2013}) when this is not so. Finally, we could also include a revenue model of accepted demand, simply resulting in $\alpha_t(.)<0$ in our problem formulation.

\noindent{\bf Constraints: } The aggregate demand in the data center at the beginning of $t$ is given by the new demand $p_t$, and any demand unmet so far (deferred from previous time slots $h(t)$) $\mathcal{R}_t$. Since this demand must be treated via a combination of the following three: (i) serve demand $a_{t,t}$, (ii) drop demand $d_{t,t}$, and (iii) postpone/delay demand ($r_{t,t+1}$) (to be served during $[t+1, ..., T]$), we have:
\begin{equation}
\label{eq:eqone}
p_t - a_{t,t} - d_{t,t} = r_{t,t+1}, ~\forall t.
\end{equation}
The residual demand from $i \in h(t)$ that is not admitted during $t$ is either dropped during $t$ or postponed to the next time slot $t+1$:
\begin{equation}
r_{i,t} - a_{i,t} - d_{i,t} = r_{i,t+1}, i \in h(t), ~\forall t.
\end{equation}
Any demand that has been postponed for $\tau$ time slots may not be postponed any further:
\begin{equation}
r_{t-\tau,t} - a_{t-\tau,t} - d_{t-\tau,t} = 0, ~\forall t.
\end{equation}
To keep our problem restricted to one billing cycle, we add an additional constraint that any delayed demand (even if it has been postponed for less than $\tau$ time slots) must be admitted by the end of our optimization horizon:
\begin{equation}
r_{i,T+1} = 0, i \in h(t).
\end{equation}
Alternate formulations that minimize costs over multiple billing cycles may relax the constraint above. The peak demand admitted $y_{\rm max}$ must satisfy the following:
\begin{equation}
y_{\rm max} \ge a_t^+,  ~\forall t.
\end{equation}
Finally, we have:
\begin{equation}
\label{eq:eqlast}
 y_{\rm max}, a_{i,t}, d_{i,t}, r_{i,t} \geq 0, i \in h^+(t), ~\forall t.
\end{equation}

\noindent{\bf Offline Problem:} Based on the above, we formulate the following offline problem (called OFF):
\begin{align*}
\textrm{Minimize}    ~~~~~~~~~~& \mathcal{O}(\{\mathcal{A}_t\},\{\mathcal{D}_t\}), \\
\textrm{subject to}  ~~~~~~~~~~& (\ref{eq:eqone})-(\ref{eq:eqlast}).
\end{align*}

\vspace{-0.1in}
\subsection{Discussion}

\begin{enumerate}
\item Owing to the following lemma, we can simplify our problem formulation somewhat by setting $d_{i,t}=0, i \in h(t), \forall t$):
\begin{lem}
\label{lem:DelayNotDrop}
There always exists some optimal solution of OFF that never postpones some demand only to drop it in the future if $l_{\sf delay}(,) > 0$.
\end{lem}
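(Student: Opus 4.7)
The natural approach is a simple swap argument. Fix any optimal solution $\mathcal{S}$ of OFF and suppose, for contradiction, that it contains a positive fluid mass $\epsilon > 0$ that is postponed at its origin window $i$ (so $\epsilon \le r_{i,i+1}$), remains in the residual chain through windows $i+1, \ldots, t^{*}-1$, and is finally dropped at some $t^{*} \in (i, i+\tau]$ (so $\epsilon \le d_{i,t^{*}}$). Such a mass can be isolated by tracing the dropped quantity at $t^{*}$ backward along the residual chain $r_{i,t^{*}}, r_{i,t^{*}-1}, \ldots, r_{i,i+1}$.

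Construct a modified solution $\mathcal{S}'$ identical to $\mathcal{S}$ except: increase $d_{i,i}$ by $\epsilon$, decrease $r_{i,s}$ by $\epsilon$ for each $s \in \{i+1, \ldots, t^{*}\}$, and decrease $d_{i,t^{*}}$ by $\epsilon$. Constraints~(\ref{eq:eqone})--(\ref{eq:eqlast}) all remain satisfied: equation~(\ref{eq:eqone}) at window $i$ is preserved because the incremental drop absorbs the decrement in $r_{i,i+1}$; the residual balance at each intermediate window shrinks by $\epsilon$ on both sides; and at $t^{*}$ the decrement in $r_{i,t^{*}}$ is absorbed by the decrement in $d_{i,t^{*}}$. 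Non-negativity is guaranteed by the choice of $\epsilon$ as the minimum of the affected residuals and $d_{i,t^{*}}$, and admissibility bounds are untouched because no $a_{i,t}$ is altered.

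Now compare the objectives. Since all admission variables $a_{i,t}$ and the peak $y_{\max}$ are unchanged, the peak charge $\beta y_{\max}$, the energy charge $\sum_t \alpha_t a_t^{+}$, and the delay penalty $\sum_t \sum_{i\in h(t)} l_{\sf delay}(a_{i,t},t-i)$ coincide in $\mathcal{S}$ and $\mathcal{S}'$. The only difference sits in the drop term and equals
\[
\Delta \;=\; \bigl[l_{\sf drop}(d_{i,i}+\epsilon) - l_{\sf drop}(d_{i,i})\bigr] \;-\; \bigl[l_{\sf drop}(d_{i,t^{*}}) - l_{\sf drop}(d_{i,t^{*}}-\epsilon)\bigr].
\]
The main obstacle is showing $\Delta \le 0$. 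I would argue as follows: if $\Delta > 0$, then by convexity of $l_{\sf drop}$ the \emph{reverse} local swap (moving an infinitesimal mass out of $d_{i,i}$ back into the residual chain and ultimately into $d_{i,t^{*}}$) would strictly reduce the objective while remaining feasible, contradicting optimality of $\mathcal{S}$. Hence $\Delta \le 0$ and $\mathcal{S}'$ is also optimal. The hypothesis $l_{\sf delay}(\cdot,\cdot) > 0$ enters here to rule out the degenerate regime in which postponement is costless and postpone-then-drop is indistinguishable from immediate drop for any other reason; with $l_{\sf delay}>0$, the swap is substantively reorganizing a wasteful decision rather than performing a null relabelling.

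Finally, iterate: each swap strictly reduces the total postponed-then-dropped mass in the solution without increasing cost, so after finitely many rounds one obtains an optimal solution with $d_{i,t}=0$ for every $t > i$, which is precisely the claimed structure.
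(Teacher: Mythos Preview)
Your swap construction is the same as the paper's: isolate mass carried from its origin $i$ and later dropped at some $t^{*}>i$, and re-route it to be dropped immediately at $i$. The gap is in the cost comparison. Your argument that $\Delta\le 0$ via a ``reverse swap'' does not close: convexity of $l_{\sf drop}$ gives no control over the sign of $\Delta$ for the finite $\epsilon$ you selected. Concretely, with $l_{\sf drop}(x)=x^{2}$, $d_{i,i}=1$, $d_{i,t^{*}}=1.5$, $\epsilon=1$ one gets $\Delta=(4-1)-(2.25-0.25)=1>0$, yet the infinitesimal reverse swap from $\mathcal{S}$ changes cost by approximately $\epsilon'\bigl(l_{\sf drop}'(d_{i,t^{*}})-l_{\sf drop}'(d_{i,i})\bigr)=\epsilon'(3-2)>0$, so it does \emph{not} improve $\mathcal{S}$ and no contradiction arises. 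Your stated use of $l_{\sf delay}>0$ is purely decorative; nothing in your inequality chain actually consumes that hypothesis.

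The paper's accounting is where $l_{\sf delay}>0$ does the real work. The paper treats the drop cost of the isolated mass $\nu$ as identical whether it is dropped at $i$ or at $t^{*}$ (effectively reading $l_{\sf drop}$ as acting additively on the mass, consistent with the linear $l_{\sf drop}(x)=k_{\sf drop}x$ used throughout), and it attributes a delay penalty $l_{\sf delay}(\nu,\,t^{*}-i)$ to having carried $\nu$ before discarding it. The modified solution then saves exactly that strictly positive term, giving $C(\mathcal{S}')<C(\mathcal{S})$ in one line. In short, $l_{\sf delay}>0$ is the \emph{source} of the strict saving in the paper's argument, not a tie-breaker for a degenerate regime; by reading the delay penalty as charged only on admitted demand you removed precisely the term that makes the swap immediate, and were then forced into a convexity detour that does not work.
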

Our proof is based on a relatively straightforward contradiction-based argument. We omit it here for space and present it in the Appendix. When devising our stochastic optimization formulation in Section~\ref{sec:stochastic}, we exploit this lemma allowing us to express our program with fewer variables and constraints.
\item  Under convexity assumptions for the functions $l_{\sf drop}(.)$ and $l_{\sf delay}(.)$, which we justified in Section~\ref{sec:back}, OFF is a computationally tractable convex program. To capture tiered pricing, our formulation can be easily extended by defining energy price as a function $\alpha_t(a)$  of the admitted demand $a$ during a billing cycle. Since tiered prices are typically non-continuous functions of admitted demand (e.g., $0.078\$/kWh$ for demand up to $750kW$ and then a jump to $0.091\$/kWh$ for demands exceeding $750kWh$~\cite{oebTier}), this aspect of our formulation can render it a non-convex and computationally difficult program. However, a dynamic programming approach (similar to the SDP in Section~\ref{sec:sdp}) can still be applied to evaluate tiered pricing by defining states appropriately.
\item Our problem formulation is general enough to incorporate the so-called ``coincident peak'' as in~\cite{liu2013}. This would be done by appropriately setting the $\alpha_t$ values for the windows when energy prices go up due to the occurrence of a coincident peak.  
\item Our formulation can also be extended to have different delay deadlines ($\tau$) for demands originating in different time slots as required in some settings. E.g., certain data analytic operations might need to be finished by the end of the day creating longer deadlines for jobs arriving earlier in the day and shorter deadlines for late arrivals. 
\end{enumerate}

\vspace{-0.1in}
\section{Stochastic Control}
\label{sec:stochastic}


Whereas OFF can be useful in devising cost-effective workload modulation (as we will corroborate in Section~\ref{sec:eval}), many data center workloads exhibit uncertainty~\cite{rao2011hedging}. Similarly, electricity prices can also exhibit uncertainty - two salient examples are schemes based on spot pricing or schemes that charge higher rates during coincident peaks~\cite{liu2013}. Deviations in demands 
or prices compared to those assumed by OFF can result in poor decision-making (as we also show in Section~\ref{sec:eval}). Consequently, it is desirable to devise workload modulation techniques that can adapt their behavior to workload and price evolution in an {\em online} manner.

Existing literature shows that data center workloads can often be captured well via statistical modeling techniques~\cite{ChenHLNRX08}. Therefore, before devising fully online algorithms in Section~\ref{sec:online}, we first develop an optimization framework based on stochastic dynamic programming (SDP) that leverages predictive models for workload demands and electricity prices.  A demand modulation algorithm resulting from such an SDP is {\em only partly  online}: it can be re-solved whenever the data center updates its predictive models for workloads demands and/or electricity prices to yield the control policy to be used till the next time another adjustment of these models is done. 

\vspace{-0.1in}
\subsection{Stochastic Dynamic Program}
\label{sec:sdp}
We choose as the goal of our SDP the minimization of the {\em expectation} of the sum of the data center's electricity bill and demand modulation penalties over an optimization horizon, and denote it $\bar{\mathcal{O}}(\{\mathcal{A}_t\}, \{\mathcal{D}_t\})$ building upon the notation used in Section~\ref{sec:offline}.   One key difficulty in our formulation arises due to the
somewhat unconventional nature of our SDP wherein our objective is a {\em hybrid} of additive (total energy) and maximal (peak power) components.\footnote{This represents an important difference from other existing SDP formulations in this area, such as~\cite{hegde11}, where their objective is optimizing a purely energy-based cost. } Consequently, we define as our state variable a combination of the residual demands $\{\mathcal{R}_t\}$ (as introduced in Section~\ref{sec:offline}) and the peak power admitted so far. 
Using $y_t$ to denote the peak demand admitted during $[1, ..., t-1]$, we represent the state at time $t$ as the $(|h(t)|+1)$-tuple $s_t=(\{r_{i,t}\}_{i \in h(t)}, y_t)$ or simply $s_t=(\mathcal{R}_t, y_t)$ for brevity. Since $|h(t)|=\tau$ for $\tau < t \le T$, we will simply say that this tuple is of size $\tau+1$.  The peak demand at the end of the optimization horizon $y_{T+1}$ then corresponds to the variable $y_{\rm max}$ employed in OFF.


Since the power demands arriving in different control windows are only known stochastically, we introduce the notation $P_t$ 
to denote these random variables, and use $p_t$ to denote a particular realization of $P_t$.  Correspondingly, we denote by $P_{[t]}=(P_1,...,P_t)$ the history of the demand up to $t$, and by $p_{[t]}=(p_1,...,p_t)$ its particular realization. We assume that the conditional probability distribution of the power demand $P_t$, i.e., ${\sf Pr}\{P_t=p_t | P_{[t-1]}=p_{[t-1]}\}$, is known. Similarly, we use $\alpha_t$ to denote a particular realization of $\Lambda_t$, the random variables for the time-varying prices in different control windows. we denote by $\Lambda_{[t]}=(\Lambda_1,...,\Lambda_t)$ and $\alpha_{[t]}=(\alpha_1,...,\alpha_t)$ the history of the price up to $t$ and its particular realization, respectively. We assume the conditional probability distribution of time-varying price $\Lambda_t$, i.e., ${\sf Pr}\{\Lambda_t=\alpha_t | \Lambda_{[t-1]}=\alpha_{[t-1]}\}$, is known. We also assume that our control/modulation actions do not affect the demand/price arrival process. 
We denote as $a_t^+$ the summation $\sum_{i \in h^+(t)} a_{i,t}$ as before.  The Bellman's optimality rules for our SDP can now be written as follows. For the last control window, we solve the following optimization problem (we denote it as SDP(T)) which gives $V_T(s_T,p_{[T-1]},\alpha_{[T-1]})$: 

\begin{equation*}
\begin{aligned}
&\min_{\mathcal{A}_T,\mathcal{D}_T} \mathbb{E}  \Big\{ \beta y_{T+1} + \alpha_T a_T^+ + l_{\sf drop}(d_{T,T}) \\
                                         & + \sum_{i \in h(T)} l_{\sf delay} (a_{i,T}) ~~| ~~P_{[T-1]}=p_{[T-1]},\Lambda_{[T-1]}=\alpha_{[T-1]} \Big\},
\end{aligned}
\end{equation*}
Subject to:
\begin{equation*}
\begin{aligned}
&p_T - a_{T,T} - d_{T,T} = 0; ~r_{i,T} - a_{i,T} = 0, i \in h(T),\\
&y_{T+1} \ge y_T; ~y_{T+1} \ge a_T^+,\\
&a_{i,T} , r_{i,T} \ge 0, i \in h^+(T); ~y_{T+1} , d_{T,T} \ge 0.
\end{aligned}
\end{equation*}
For control windows $t=T-1, ..., 1$, we solve $SDP(t)$ to obtain $V_t(s_{t},p_{[t-1]},\alpha_{[t-1]})$:
\begin{equation*}
\begin{aligned}
&\min_{\mathcal{A}_t,\mathcal{D}_t} \mathbb{E}  \Big\{\alpha_t a_t^+ + l_{\sf drop} (d_{t,t}) + \sum_{i \in h(t)}l_{\sf delay} (a_{i,t}) \\
                                         & + V_{t+1}(s_{t+1},p_{[t]},\alpha_{[t]}) ~~| ~~P_{[t-1]}=p_{[t-1]},\Lambda_{[t-1]}=\alpha_{[t-1]} \Big\},
\end{aligned}
\end{equation*}
Subject to:
\begin{equation*}
\begin{aligned}
&p_t -  a_{t,t} -d_{t,t} = r_{t,t+1}; ~r_{i,t} - a_{i,t} = r_{i,t+1}, i \in h(t+1), \\
&y_{t+1} \ge y_t; ~y_{t+1} \ge a_t^+,\\
&a_{i,t} \ge 0,  i \in h(t); r_{i,t+1}\ge 0, i \in h(t+1); ~d_{t,t}, y_{t+1} \ge 0.
\end{aligned}
\end{equation*}


\noindent{\bf Computational Tractability Concerns:}
Unfortunately, in its general form, our SDP is likely to be computationally prohibitive due to the well-known ``curse of dimensionality''~\cite{bert}. An inspection of the optimization problems SDP(t) reveals multiple contributors to such scalability limitations: (i) the number of discretization levels $L_p$ and $L_{\alpha}$ used for power demands and energy price, respectively, (ii) the number of control windows $T$ (which depends both on the billing cycle and the control window $\delta$), (iii) the complexity of predicting workload demands and prices (in terms of the amount of historical data needed), (iv) the presence of peak-based charging which results in the incorporation of an additional variable $y_t$ in our state definition, and (v) the option of delaying the workload by up to $\tau$ windows and the non-linear nature of the delay-related penalties (which requires us to maintain $\tau$ residual demands in our state definition). More quantitatively, under stage-independence assumptions on demands and prices, and denoting as $O(R)$ the run-time of sub-problem SDP(t), the runtime for our SDP can be expressed as $O(R\cdot L_p^{2(\tau+2)}\cdot L_{\alpha} \cdot T)$.

\begin{wrapfigure}[11]{r}{0.25\textwidth}
  \centering
  \includegraphics[width=0.23\textwidth]{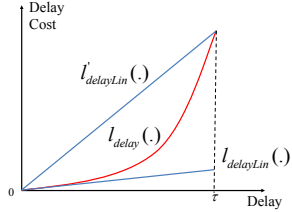}
  \caption{\small Linear approximation of delay cost.} 
\label{fig:linearCost}
\end{wrapfigure}

Of these, 
in our specific problem domain, (i) and (ii) are not scalability barriers (e.g., our choice of $\delta=10$ minutes results in a manageable $T=4320$). Numerous efforts on predicting data center workloads (e.g.,~\cite{Pacheco-SanchezCSMPD11}) suggest the adequacy of (first- or low-order) Markovian models which help ensure that (iii) is also not a scalability barrier.  This leaves (iv) and (v) of which (v) is clearly the more significant - especially for large values of $\tau$ (notice the $L_p^{2(\tau+2)}$ multiplier). 


\vspace{-0.2in}
\subsection{Scalable Approximations of Our {\rm SDP}}
Numerous ideas exist for devising scalable approximations to an SDP~\cite{bert}. The most salient ideas include merging of states and model predictive control.  We investigate two successive simplifications to our SDP for 
overcoming its scalability problems caused by large values of $\tau$. 


{\noindent \bf Linear Approximation of Delay Penalties:} Our first approximation, called SDP$_{\rm Lin}$, employs a linear approximation $l_{\sf delayLin}(.)$ for the function $l_{\sf delay}(.)$ (see Figure~\ref{fig:linearCost} for examples). Specifically, if $l_{\sf delayLin}(x,y)$ has the form $k_{\sf delay}xy$ for $k_{\sf delay}>0$, then it is easily seen that the SDP state can be simplified to the 2-tuple $s_t=(y_t,r_t)$, where $r_t$ denotes the {\em sum} of the residual demands ($\mathcal{R}_t$) at the beginning of $t$ and the optimality rules can be written as follows. 
We denote as $d_t$ and $a_t$ the demand that is dropped and admitted out of $p_t$ respectively, ${a'}_t$ the demand that is admitted out of $r_t$. For the last control window, we solve the following optimization problem (we denote it as SDP$_{\rm Lin}$(T)) which gives $V_T(s_T,p_{[T-1]},\alpha_{[T-1]})$: 

\begin{equation*}
\begin{aligned}
&\min_{a_T,{a'}_T,d_T} \mathbb{E} \Big\{ \beta y_{T+1} + \alpha_T(a_T+a'_T) + l_{\sf drop} (d_T)\\
                                         &  + k_{\sf delay}{a'}_T ~~| ~~P_{[T-1]}=p_{[T-1]},\Lambda_{[T-1]}=\alpha_{[T-1]} \Big\},
\end{aligned}
\end{equation*}
Subject to:
\begin{equation*}
\begin{aligned}
&p_T - a_T - d_T = 0; ~r_T - {a'}_T = 0, \\
&y_{T+1} \ge y_T; ~y_{T+1} \ge a_T+{a'}_T,\\
&a_T, {a'}_T, y_{T+1},d_{T} \ge 0.
\end{aligned}
\end{equation*}
For control windows $t=T-1, ..., 1$, we solve SDP$_{\rm Lin}$(t) to obtain $V_t(s_{t},p_{[t-1]},\alpha_{[t-1]})$:
\begin{equation*}
\begin{aligned}
&\min_{a_t,{a'}_t,d_t} \mathbb{E}\Big\{ \alpha_t(a_t+{a'}_t) + l_{\sf drop}(d_t) + k_{\sf delay}{a'}_t \\
                                         & + V_{t+1}(s_{t+1},p_{[t]},\alpha_{[t]}) ~~| ~~P_{[t-1]}=p_{[t-1]},\Lambda_{[t-1]}=\alpha_{[t-1]} \Big\},
\end{aligned}
\end{equation*}
Subject to:
\begin{equation*}
\begin{aligned}
&(p_t - a_t - d_t) + (r_t - {a'}_t)  = r_{t+1}, \\
&r_t - {a'}_t \ge 0; ~p_t - a_t - d_t  \ge 0, \\
&y_{t+1} \ge y_t; ~y_{t+1} \ge a_t+{a'}_t,\\
& a_t, {a'}_t , d_t, y_{t+1} , r_{t+1}\ge 0.
\end{aligned}
\end{equation*}

SDP$_{\rm Lin}$ offers a significantly reduced runtime of $O(R\cdot L_p^5 \cdot L_{\alpha}\cdot T)$ (again under the assumptions of stage-independence for demands and prices and denoting as $O(R)$ the runtime of a sub-problem SDP$_{\rm Lin}$(t)). Of course, this is at the cost of poorer solution quality, and we empirically evaluate this trade-off in Section~\ref{sec:eval}. 

{\noindent \bf Ignoring the Delay Knob:} Our second approximation, called SDP$_{\rm Drop}$, is based on completely ignoring the (computationally) problematic knob of demand delaying. Notice that it is a special case of SDP$_{\rm Lin}$ in that it can be viewed as employing $\tau=0$. With this approximation, the state simplifies even further to $s_t=y_t$ and the set of control variables shrinks to only $d_t$ to describe dropping. SDP$_{\rm Drop}$ has a reduced runtime of $O(R\cdot L_p^3 \cdot L_{\alpha}\cdot T)$. Furthermore, the structure of the optimal policy for SDP$_{\rm Drop}$ given by the following lemma (proof in the Appendix) provides a way of explicitly finding the optimal control policy~\cite{bert}.




\begin{lem}
\label{lem:SDPoptStructure}
If the demands $p_t$ are independent across $t$,  SDP$_{\rm Drop}$ has a threshold-based optimal control policy. 
Define $\mu_t=\frac{a_t}{p_t},\mu_t \in [0,1]$, $\mathcal{G}_t(\mu_t)=\mathbb{E} \{\alpha_t(\mu_t p_t)\times \mu_t p_t +l_{\sf drop}(p_t-\mu_t p_t)+V_{t+1}(\max\{\mu_t p_t,y_t\})\}$
and $\phi_t(y_t)=\arg \min_{\mu_t \in \mathcal{R}^+}\mathcal{G}_t(\mu_t)$. Then the threshold-based optimal policy $(a_t^*,d_t^*)$ is as follows:\\
\begin{equation*}
\begin{aligned}
(a_t^*,d_t^*)=
\begin{cases}
   (\phi_t p_t, p_t-\phi_t p_t), ~~&{\rm if} \phi_t \le 1\\
    (p_t, 0), &{\rm if} \phi_t > 1
\end{cases}
\end{aligned}
\end{equation*}
\end{lem}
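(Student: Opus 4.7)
The plan is to reduce each Bellman step of SDP$_{\rm Drop}$ to a one-dimensional convex minimization in the admitted fraction $\mu_t=a_t/p_t$, and then read off the threshold as the projection of the unconstrained minimizer onto the feasible interval $[0,1]$. Under stagewise independence of $\{p_t\}$, the post-decision value function $V_{t+1}$ depends on the state only through $y_{t+1}$, and the inner minimization over $a_t$ (equivalently over $\mu_t\in[0,1]$) collapses to minimizing $\mathcal{G}_t(\mu_t)$; the $\mathbb{E}$ in the definition of $\mathcal{G}_t$ absorbs the remaining exogenous randomness, and if it includes $p_t$ itself then $\phi_t$ depends only on $y_t$, matching the notation $\phi_t(y_t)$ used in the statement.

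Next, I would prove by backward induction on $t$ that $V_t$ is both convex and non-decreasing in $y_t$. The base case $V_{T+1}(y)=\beta y$ is immediate. For the inductive step, I would use (i) convexity and monotonicity of $l_{\sf drop}(\cdot)$ as assumed in Section~\ref{sec:assump}; (ii) convexity (in fact linearity) of the energy-cost term $\mu_t\mapsto\alpha_t(\mu_t p_t)\,\mu_t p_t$, which is the pricing regime the lemma is pointed at since tiered pricing generally breaks convexity, as noted in item~2 of the Discussion in Section~\ref{sec:offline}; and (iii) the fact that $(\mu_t,y_t)\mapsto\max\{\mu_t p_t,\,y_t\}$ is jointly convex and non-decreasing in each coordinate, so composing it with the convex non-decreasing $V_{t+1}$ preserves joint convexity. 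Summing these pieces makes $\mathcal{G}_t$ jointly convex in $(\mu_t,y_t)$ and non-decreasing in $y_t$; partial minimization in $\mu_t$, followed by expectation over any remaining exogenous randomness, preserves both properties and closes the induction.

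With $\mathcal{G}_t(\cdot)$ convex in $\mu_t$ in hand, the threshold form follows by a standard one-dimensional argument. If $\phi_t(y_t)\le 1$, the unconstrained minimizer already lies in the feasible set $[0,1]$ and is therefore the constrained minimizer, yielding $(a_t^*,d_t^*)=(\phi_t p_t,\,(1-\phi_t)p_t)$. If $\phi_t(y_t)>1$, convexity implies $\mathcal{G}_t$ is non-increasing on $[0,\phi_t]\supseteq[0,1]$, so the constrained minimum on $[0,1]$ is attained at the right endpoint $\mu_t=1$, giving $(a_t^*,d_t^*)=(p_t,0)$. The most delicate step in the whole argument is the convexity-plus-monotonicity induction on $V_{t+1}$: composing $V_{t+1}$ with the peak-tracking $\max\{\,\cdot\,,y_t\}$ only preserves convexity when $V_{t+1}$ is additionally non-decreasing, so both properties must be carried through the recursion in lockstep. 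I would also make explicit up front how the expectation in $\mathcal{G}_t$ is interpreted, since that determines whether the optimality claim is global or is to be read as optimality within the class of proportional rules of the form $a_t=\mu_t(y_t)\,p_t$.
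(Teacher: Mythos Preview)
Your proposal is correct and follows essentially the same route as the paper: backward induction establishing convexity of the value functions $V_t$, then reading off the optimal $\mu_t^*$ as the projection of the unconstrained minimizer $\phi_t$ onto $[0,1]$. In fact you are more careful than the paper on one point: the paper's induction asserts that $V_{t+1}$ convex implies $V_{t+1}(\max\{y_t,\cdot\})$ convex in $y_t$ without explicitly invoking monotonicity of $V_{t+1}$, whereas you correctly flag that convexity and non-decreasingness must be carried through the recursion together for the composition with $\max$ to preserve convexity.
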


\vspace{-0.2in}
\section{Online Algorithms}
\label{sec:online}

We devise two online control policies that might prove useful 
in scenarios where predictive models for power demands fail.\footnote{Extending these algorithms to also deal with uncertainty in electricity prices is an important direction for future work.}  Due to space constraints, we discuss only one (ON$_{\rm Drop}$) in detail and briefly outline the second one (ON$_{\rm MPC}$). 

\noindent{\bf {\rm ON}$_{\rm Drop}$: Online Algorithm With Only Dropping:} As discussed in Section~\ref{sec:back}, many workloads are delay-intolerant but do allow (possibly implicit) dropping of the power demands they pose (e.g., recall Example 1) with associated loss in revenue. For these scenarios, it is possible - under some assumptions - to devise a completely online algorithm which offers a competitive ratio of $2-\frac{1}{n}$ (i.e., the ratio of its cost and the optimal cost is guaranteed to be upper bounded by $2-\frac{1}{n}$ even under adversarial demands). As two simplifications, we assume (i) $l_{\sf drop}(x)$ has the form $k_{\sf drop}x$ as in Section~\ref{sec:eval}, and (ii) $\alpha_t=\alpha, \forall t$. We leave the extension of ON$_{\rm Drop}$ and its competitive analysis without these assumptions for future work. Notice that the resulting problem setting is identical to that considered for SDP$_{\rm Drop}$ with the difference that whereas ON$_{\rm Drop}$ assumes adversarial power demand inputs, SDP$_{\rm Drop}$ was designed for power demands that can be described stochastically.

The following lemma is key to the design of ON$_{\rm Drop}$.

\begin{lem}
\label{lem:onlydemanddropping}
If only demand dropping is allowed, $l_{drop}(x)=k_{\sf drop}x$, and $\alpha_t=\alpha$, then the optimal demand dropping threshold $\theta$ has the following form: if we denote $\hat{p}_t$ as the $t^{\rm th}$ largest demand value in $\{p_t\}_{t=1}^{T}$, then $\theta=\hat{p}_n$, where $n=\lceil \frac{\beta}{(k_{\sf drop}-\alpha)} \rceil$.
\end{lem}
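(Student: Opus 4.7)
The plan is to reduce the problem to a single-variable optimization over the peak threshold, observe that the resulting objective is a piecewise-linear function of the threshold with breakpoints exactly at the order statistics $\hat p_1 \ge \hat p_2 \ge \cdots \ge \hat p_T$, and then locate the minimizer by tracking where the slope changes sign.

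First I would establish a threshold structure. For any target admitted peak $y$, admitting $a_t=\min(p_t,y)$ and dropping $d_t=(p_t-y)^+$ dominates every other feasible admission pattern with that peak, because the peak bill is already $\beta \cdot \max_t a_t \le \beta y$, while swapping one admitted unit (cost $\alpha$) for a dropped unit (cost $k_{\sf drop}$) costs an extra $k_{\sf drop}-\alpha>0$. (The case $k_{\sf drop}\le\alpha$ is degenerate with $\theta=0$ trivially optimal.) The problem therefore collapses to
\[
\min_{\theta\ge 0}\; f(\theta) := \beta\theta + \alpha\sum_{t=1}^{T}\min(p_t,\theta) + k_{\sf drop}\sum_{t=1}^{T}(p_t-\theta)^{+}.
\]

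Next I would compute the slope of $f$ on the generic interval $\theta\in(\hat p_{k+1},\hat p_k)$, on which exactly $k$ demands exceed $\theta$:
\[
f(\theta) = \beta\theta + \alpha k\theta + \alpha\!\!\sum_{j=k+1}^{T}\!\hat p_j \;+\; k_{\sf drop}\!\!\sum_{j=1}^{k}\!\hat p_j - k_{\sf drop}\, k\theta,
\]
so $f'(\theta)=\beta - k(k_{\sf drop}-\alpha)$. This slope is strictly decreasing in $k$; as $\theta$ sweeps downward across the breakpoints, $k$ increases from $0$ to $T$ and the slope moves from positive to negative. Setting $n=\lceil \beta/(k_{\sf drop}-\alpha)\rceil$, on $(\hat p_n,\hat p_{n-1})$ we have $k=n-1<\beta/(k_{\sf drop}-\alpha)$ so the slope is strictly positive, while on $(\hat p_{n+1},\hat p_n)$ we have $k=n\ge \beta/(k_{\sf drop}-\alpha)$ so the slope is non-positive. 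Since $f$ is continuous and piecewise linear, it is strictly decreasing above $\hat p_n$ and non-decreasing below $\hat p_n$, so the minimum is attained at $\theta^\star=\hat p_n$.

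The main obstacle I expect is the threshold-reduction step: one has to rule out any optimal policy that, in some window, admits strictly less than $\min(p_t,y)$ where $y$ is the eventual admitted peak. This is handled by an elementary exchange argument exploiting $k_{\sf drop}>\alpha$ together with the fact that the peak charge depends only on the maximum admitted value; once this is nailed down the remainder is bookkeeping. Two small edge cases deserve a line each: (i) if $\beta/(k_{\sf drop}-\alpha)\ge T$ then $n>T$, the slope stays positive throughout, and the optimum is $\theta\ge\hat p_1$ (no dropping occurs), consistent with the formula under the convention $\hat p_n=\hat p_T$ for $n>T$; and (ii) ties among the $\hat p_k$ merely collapse adjacent breakpoints without affecting the slope-sign argument.
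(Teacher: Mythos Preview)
Your proposal is correct and follows essentially the same approach as the paper: both reduce to a one-variable minimization over the admitted peak, show the objective is piecewise linear with breakpoints at the order statistics, and locate the minimizer by the sign change of the slope $\beta - k(k_{\sf drop}-\alpha)$. One minor slip: in your final sentence the words ``decreasing'' and ``non-decreasing'' are swapped (positive slope above $\hat p_n$ means $f$ is \emph{increasing} there, non-positive slope below means $f$ is \emph{non-increasing} there), but your conclusion $\theta^\star=\hat p_n$ is unaffected.
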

Lemma~\ref{lem:onlydemanddropping} tells that the optimal demand dropping threshold of OFF when only dropping is allowed equals to the $n^{\rm th}$ largest value in $\{p_t\}_{t=1}^{T}$. If we can keep track of the $n^{\rm th}$ largest power value in an online fashion, finally we can find the optimal demand dropping threshold after observing all the demand in the optimization horizon. So we exploit Lemma~\ref{lem:onlydemanddropping}, to devise ON$_{\rm Drop}$ as follows:

\begin{enumerate}
\item Initialization: Set demand dropping threshold $\theta=0$, $n=\lceil \frac{\beta}{(k_{\sf drop}-\alpha)} \rceil$.
\item At time $t$, sort $p_1,...,p_t$ into $\hat{p}_1,...,\hat{p}_t$ such that $\hat{p}_1 \geq \hat{p}_2 \geq ... \geq \hat{p}_t$.
\item Update $\theta$ as follows: If $t<n$, $\theta=0$; Otherwise, $\theta=\hat{p}_n$.
\item Decision-making: Admit $\min(p_t,\theta)$, drop $[p_t-\theta]^+$. 
\end{enumerate}

\begin{thm}
\label{thm:cr}
{\rm ON}$_{\rm Drop}$ offers a competitive ratio of $2-\frac{1}{n}$.
\end{thm}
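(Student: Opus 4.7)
The plan is to reduce the competitive-ratio claim to a clean cost accounting followed by one combinatorial identity. Let $\theta^{\star}:=\hat{p}_n$ denote the $n$-th largest demand over the full horizon, let $\theta_t$ be the online threshold used by ON$_{\mathrm{Drop}}$ at time $t$, and define $S_{\mathrm{OFF}}:=\sum_{t=1}^{T}\min(p_t,\theta^{\star})$ and $S_{\mathrm{ON}}:=\sum_{t\ge n}\min(p_t,\theta_t)$. First I would invoke Lemma~\ref{lem:onlydemanddropping} to fix OFF as the fixed-threshold policy at $\theta^{\star}$. Rearranging, both algorithms' costs take the common form $k_{\sf drop}\sum_t p_t-(k_{\sf drop}-\alpha)S+\beta\pi$, with $S$ the total admitted energy and $\pi$ the admission peak. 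Subtracting yields the key identity
\[
\mathrm{ON}-\mathrm{OFF}=(k_{\sf drop}-\alpha)(S_{\mathrm{OFF}}-S_{\mathrm{ON}})+\beta(\pi_{\mathrm{ON}}-\theta^{\star}),
\]
which isolates the two sources of loss: missed admissions and peak difference.

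Next I would check two facts about the online thresholds that follow immediately from the algorithm's definition: the sequence $\{\theta_t\}_{t\ge n}$ is non-decreasing in $t$, and $\theta_T=\theta^{\star}$. These jointly give $\theta_t\le\theta^{\star}$ for every $t\ge n$, and hence $\pi_{\mathrm{ON}}\le\theta^{\star}$ (for $t<n$ ON admits zero, which cannot inflate the peak). So the term $\beta(\pi_{\mathrm{ON}}-\theta^{\star})$ is non-positive and can be discarded from the upper bound, leaving the admission deficit $S_{\mathrm{OFF}}-S_{\mathrm{ON}}$ as the only quantity to control.

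The core step, and the one I expect to be the hardest, is the combinatorial identity $S_{\mathrm{OFF}}-S_{\mathrm{ON}}=(n-1)\theta^{\star}$, which I would prove by a charging argument on demand indices. The key observation is that at every $t\ge n$, ON admits either $p_t$ itself (when $p_t$ fails to enter the running top-$(n{-}1)$ of $\{p_1,\dots,p_t\}$) or precisely the unique value that $p_t$ just displaced from the running top-$(n{-}1)$. Tracking this ``bump-out'' map forward through time yields a bijection showing that every index except the $n-1$ largest values $\hat{p}_1,\dots,\hat{p}_{n-1}$ contributes its own $p_i$ to $S_{\mathrm{ON}}$ exactly once; so $S_{\mathrm{ON}}=\sum_t p_t-\sum_{i=1}^{n-1}\hat{p}_i$. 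Combined with $S_{\mathrm{OFF}}=n\theta^{\star}+\sum_{i>n}\hat{p}_i$ (direct from the definition of $\theta^{\star}$) and $\sum_t p_t=\sum_i \hat{p}_i$, everything telescopes to $(n-1)\theta^{\star}$. Ties among the $p_t$'s need only a fixed tie-breaking rule (or an infinitesimal perturbation) to make the bump-out map a true bijection.

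To close, I would substitute into the cost difference and lower-bound $\mathrm{OFF}$ by its peak component $\beta\theta^{\star}$, together with the balance identity $n(k_{\sf drop}-\alpha)=\beta$ forced in the tight regime of $n=\lceil\beta/(k_{\sf drop}-\alpha)\rceil$, giving
\[
\frac{\mathrm{ON}}{\mathrm{OFF}}\;\le\;1+\frac{(n-1)(k_{\sf drop}-\alpha)\theta^{\star}}{\beta\theta^{\star}}\;=\;1+\frac{n-1}{n}\;=\;2-\frac{1}{n}.
\]
Tightness would be exhibited by the adversarial instance $p_1=\cdots=p_{n-1}=\epsilon$, $p_n=\cdots=p_T=M$ as $\epsilon\to 0$, which saturates both the admission-deficit bound and the peak-cost lower bound simultaneously.
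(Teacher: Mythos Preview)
Your proof is correct and shares the paper's overall scaffolding (same cost decomposition, same lower bound $\mathrm{OFF}\ge\beta\theta^{\star}$, same closing arithmetic), but the heart of the argument---controlling the admission deficit $S_{\mathrm{OFF}}-S_{\mathrm{ON}}$---is handled by a genuinely different and sharper technique.

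The paper proves only the \emph{inequality} $\sum_t a_t^T-\sum_t a_t^d\le (n-1)\theta_T$, by induction on the prefix length $m$: it maintains the invariant $\Delta_m\le(n-1)\theta_m$ (where $\theta_m$ is the optimal threshold for the first $m$ slots) and bounds the increment $\Delta_{m,m+1}$ by $(n-1)(\theta_{m+1}-\theta_m)$, using that at most $n-1$ past demands strictly exceed $\theta_m$. Your bump-out bijection instead yields the \emph{equality} $S_{\mathrm{OFF}}-S_{\mathrm{ON}}=(n-1)\theta^{\star}$ directly: each step $t\ge n$ ``settles'' exactly one index (either $p_t$ or the value it displaces from the running top-$(n{-}1)$), so over $t=n,\dots,T$ ON admits precisely $\{\hat p_n,\dots,\hat p_T\}$ and the identity drops out. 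This is tighter, more structural, and explains \emph{why} the deficit has that exact size; the paper's induction is more mechanical and does not require the tie-breaking care you note. A minor related difference: the paper uses $\pi_{\mathrm{ON}}=\theta^{\star}$ exactly (which does hold, since the threshold is attained the last time it strictly increases), whereas you only use $\pi_{\mathrm{ON}}\le\theta^{\star}$; either suffices.

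One wrinkle you and the paper share: the closing step invokes $n(k_{\sf drop}-\alpha)=\beta$, but $n=\lceil\beta/(k_{\sf drop}-\alpha)\rceil$ only guarantees $(n-1)(k_{\sf drop}-\alpha)<\beta\le n(k_{\sf drop}-\alpha)$. With the deficit bound substituted, the ratio one actually obtains is $1+(n-1)(k_{\sf drop}-\alpha)/\beta$, which lies in $[2-\tfrac1n,\,2)$ and equals $2-\tfrac1n$ only when $\beta/(k_{\sf drop}-\alpha)$ is an integer. You flag this as ``the tight regime,'' which the paper does not, but neither proof handles the non-integer case explicitly.
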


We present the proof in the Appendix. Based on an induction on the length of the optimization horizon $T$, we find an upper bound of the difference between the total admitted demand of the optimal solution and ON$_{\rm Drop}$ in our proof. 
One useful way of understanding ON$_{\rm Drop}$ is to view it as a generalization of the classic ski-rental problem~\cite{Karlin1990}: increasing the dropping threshold is analogous to purchasing skis while admitting/dropping demands according to the existing threshold is analogous to renting them. The generalization lies repeating the basic ski-rental-like procedure after every $n$ slots.

\noindent{\bf {\rm ON}$_{\rm MPC}$: Model Predictive Control:}
Finally, we also develop an online algorithm by adapting ideas from Model Predictive Control (MPC)~\cite{macie}, a well-known suboptimal control theory. Adapting ideas from MPC, we assume that at time $t$, our control has access to accurate/predictable information about the inputs (i.e., the power demand/electricity price ) during a short-term future time window  $[t, t+h-1]$ of size $h$ (called the ``lookahead window''). It then solves a smaller version of the original optimization problem defined over an optimization window of size $H$ (called the ``rolling horizon'') where $h \le H \le T$, assuming that the power demand during $[t+h, t+H]$ to be given by the mean power demand (or time-of-day demand\footnote{Many real-world workloads of data centers exhibit strong time-of-day behavior, as shown in Section~\ref{sec:eval}, which can serve as the base demand for prediction model. Note that in this paper our focus is exploring demand modulation under different pricing schemes instead of demand prediction.}) obtained from previous observations. More sophisticated versions of MPC have been employed in recent related work~\cite{WangC08}. Our purpose in devising this algorithm is to employ it as a baseline that we empirically compare with our other algorithms. We evaluate ON$_{\rm MPC}$ in Section~\ref{sec:eval} with parameter settings in Table~\ref{tbl:params}.

{\bf Algorithmic Details:} At time slot $t$, we solve the following convex program $V_t(s_t,p_{[t-1]})$ in the optimization window $[t,t+H-1]$:
\begin{equation*}
\begin{aligned}
\min_{\mathcal{A}_m,\mathcal{D}_m} & \beta y_{\rm t+H} + \sum_m \Big\{\alpha_m a_m^+  + \sum_{i \in h^+(m)} l_{\sf drop}(d_{i,m}) \\
&+  \sum_{i \in h(m)} l_{\sf delay}(a_{i,m},m-i)\Big\} \\
\textrm{subject to}~&(\ref{eq:eqone})-(\ref{eq:eqlast})~\textrm{with $m$ replacing}~t, \forall m \in [t,t+H-1].
\end{aligned}
\end{equation*}

Following this, the optimal control at the current control window $t$, i.e., $(a_{i,t}^*,d_{i,t}^*)$, $i\in h(t)^+$ is implemented while the rest of the optimal control sequence for $m \in \{t+1, ..., t+H-1\}$ is discarded, and we shift the optimization horizon forward by one time slot (control window), and repeat the above procedure till $t+H-1=T$.

\vspace{-0.1in}
\section{Experimental Results}
\label{sec:eval}

We evaluate our algorithms using both power demands derived from real-world traces and constructed synthetically. {\em Our metric is "cost savings" which is the percentage reduction in costs due to a particular algorithm compared to doing no demand modulation.} We assume $l_{\sf drop}(x)=k_{\sf drop}x$ and $l_{\sf delay}(x,t)=k_{\sf delay}t^2 x$, respectively (here we choose linear dropping cost which is similar to the cost model used in ~\cite{freris2010resource}. Also note that our choice of a quadratic relation between delay-related costs and the amount of delay has been found an appropriate choice in some studies~\cite{van1995dynamic}; other recent studies have assumed linear costs in their evaluation~\cite{LiuLWLA11}). 

\vspace{-0.1in}
\subsection{Parameters and Workloads}
\begin{table}[htbp]
\scriptsize
\centering
\begin{tabular}{||l|l|l||}
\hline \hline
Param.                & Description                             & Value \\
\hline \hline
$\alpha, \beta$       & Energy price (\$/kWh),                 &  0.046, \\
                      & Peak power price (\$/kW/month)         &  17.75~\cite{sceg} \\ \hline
$k_{\sf drop}$         & Cost incurred by                        & 0.72 \\
                      & dropping demand (\$/kWh)                & \\\hline
$k_{\sf delay}$        & Cost incurred per                       & 0.02~\cite{LiuLWLA11}\\
                      & $\delta$ for delaying (\$/kW/month))    & \\\hline
$\tau$                & Maximum delay allowed                   & 6$\delta$ (1 hour)\\\hline
$\delta$              & Control window                          & 10 minutes\\\hline
$T$                   & Optimization window                     & 4320 (30 days / $\delta$)\\\hline
$H, h$                & Receding horizon,                       & 1 day\\
                      & Lookahead window                        & 6 hours \\\hline
\hline
\end{tabular}
\caption{\small Various problem parameters.}
\label{tbl:params}
\end{table}

Table~\ref{tbl:params} lists the values used for various parameters in our evaluation, along with sources when applicable. We set $k_{\sf drop}=k_{\sf delay}*\tau^2$ since: (i) dropping can be thought of as an extreme case of delaying (i.e, for an infinite amount of time) in our formulations, and (i) when some demand unit has been delayed by $\tau$, the only knob for additional demand modulation that remains is dropping it. 
We employ three real-world power demands presented in recent studies: Google, Facebook, MediaServer (streaming media)~\cite{LiKXLSY08,FanWB07,ChenGGK11}. Additionally, we create a synthetic power demand series with an emphasis on including an unpredictable surge in power demand (e.g., as might occur due to a flash crowd). Each power demand series spans 30 days (a typical electric utility billing cycle\footnote{~\cite{XuL2013} evaluates the proposed algorithm with planning period equal to one day under peak-based tariff, which is unrealistic considering the billing cycle of real-world tariff. }), with each point in the series corresponding to the average power demand over a 10 minute period. Synthetic is built by adding a high power surge to the demand for MediaServer  on the 15$^{\rm th}$ day. Since in many cases the real-world workload of data centers exhibits strong time-of-day behavior, we use this time-of-day plus a zero-mean Gaussian noise in our stochastic control models SDP$_{\rm Lin}$ and SDP$_{\rm Drop}$. We show these power demands and their time-of-day behaviors in Figure~\ref{fig:workloads}. The peak demand is $3MW$ and $5.022MW$ for the three real-world workloads and Synthetic, respectively.

To help understand the features of our workloads, we select two workload properties that we intuitively expect to be informative about achievable cost savings: 
(i) the {\em peak-to-average ratio} (PAR), which captures the peak-shaving potential of the workload and (ii) {\em peak width} (P$_{70}$), which is defined as the percentage of the number time slots in which the power demand value is larger than 70\% of the peak power demand. We show these parameters in Fig.~\ref{fig:workloads}. 
To give readers an intuitive feeling of the statistical features of the workloads (complementary to PAR and P$_{70}$), we show the histograms of these workloads in Figure~\ref{fig:workloadHist}. Higher frequency of occurrence of large power values implies larger P$_{70}$ (e.g., Google), and a "thinner" histogram may imply higher PAR (e.g., Synthetic). 

\begin{figure}[htbp]
\begin{center}
\includegraphics[width=0.23\textwidth]{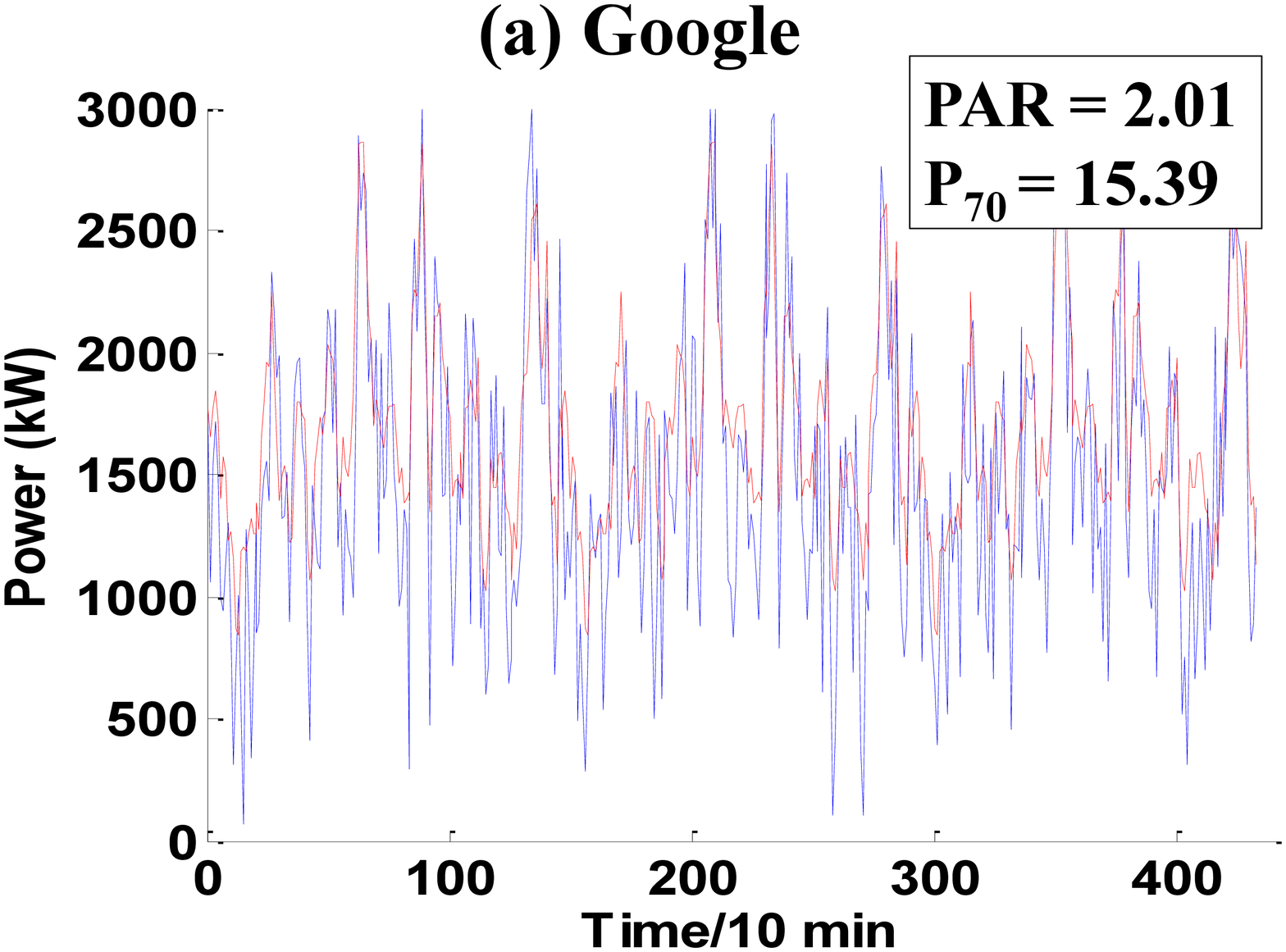}
\includegraphics[width=0.23\textwidth]{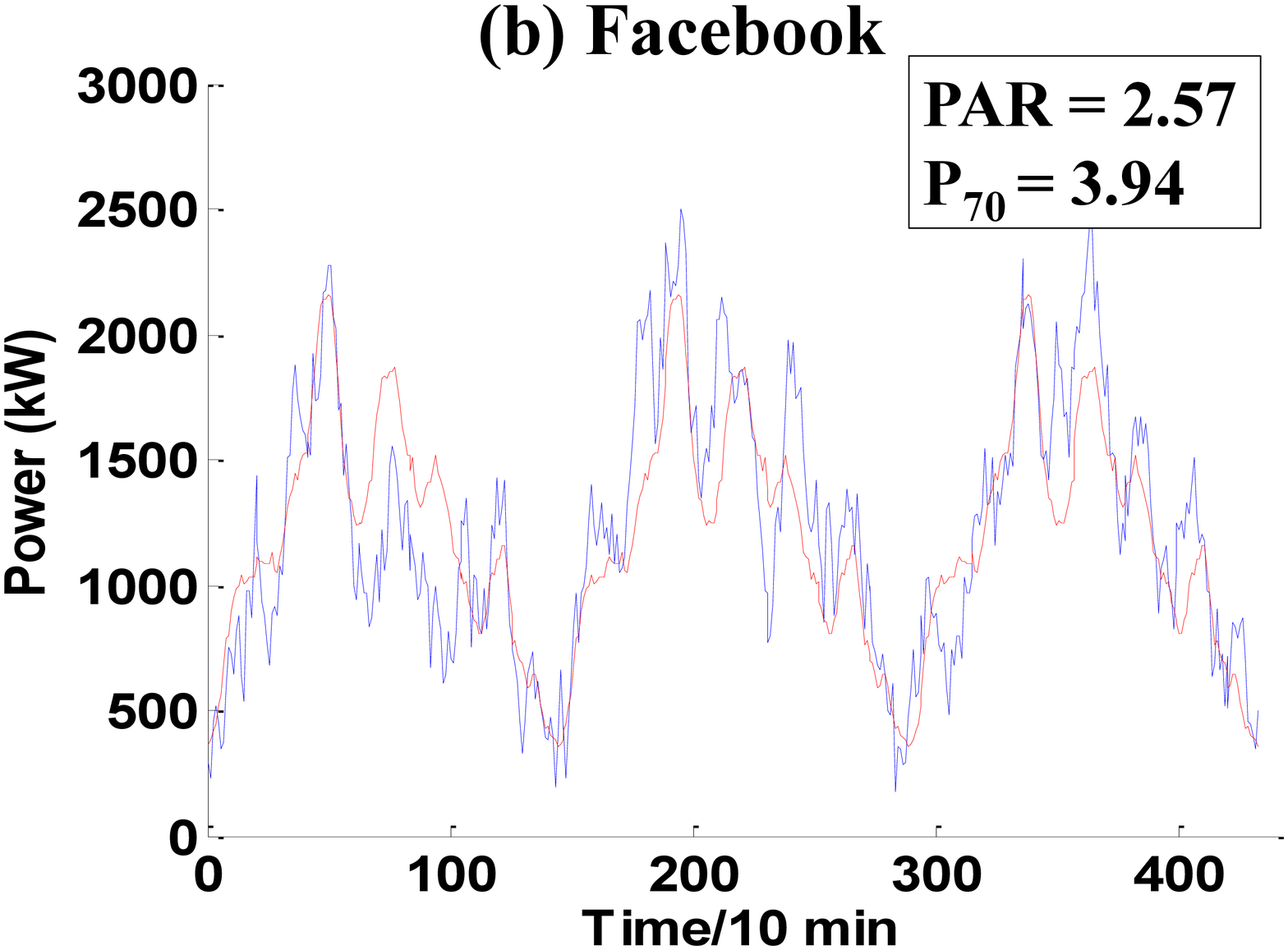}\\
\includegraphics[width=0.23\textwidth]{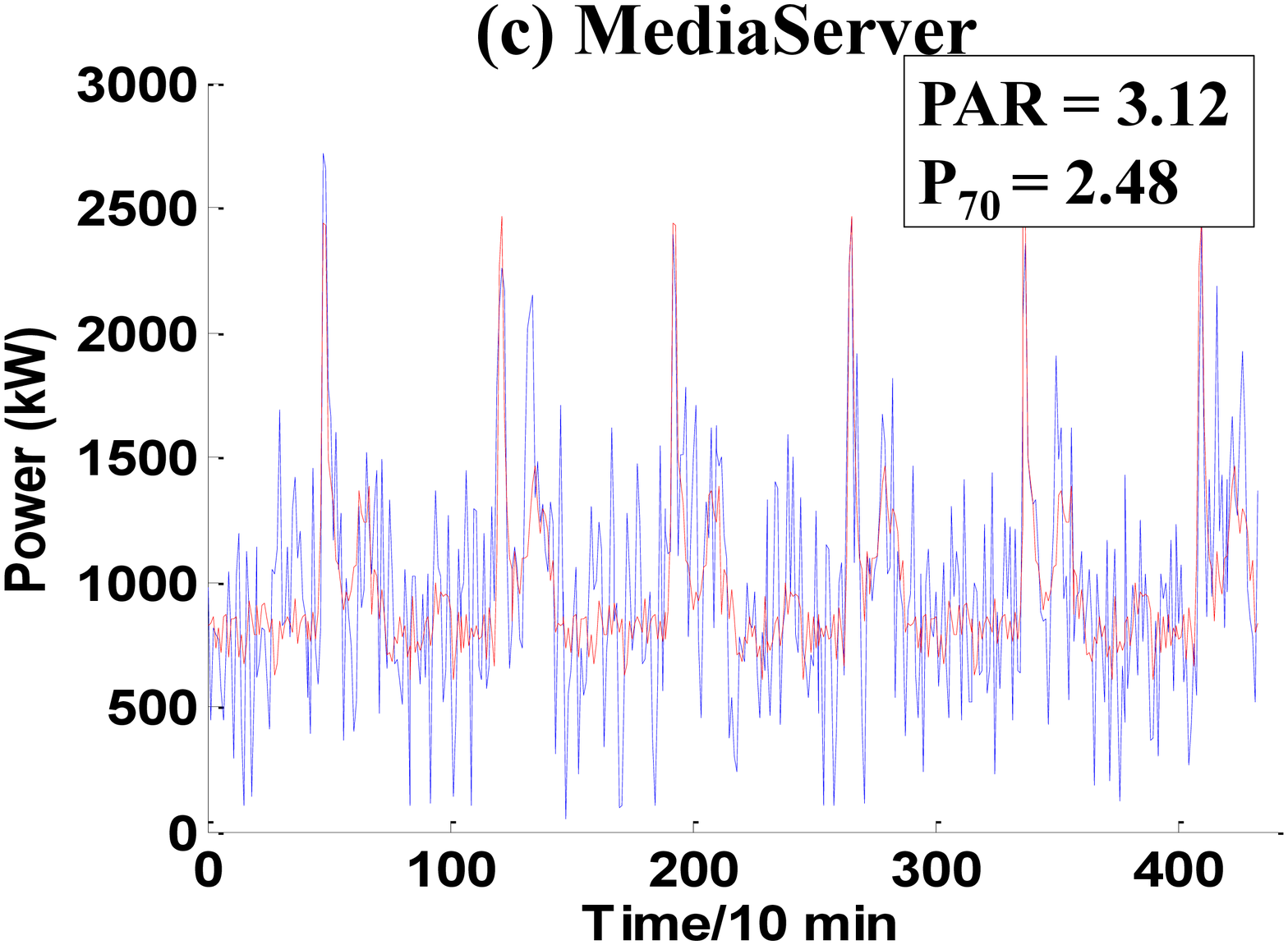}
\includegraphics[width=0.23\textwidth]{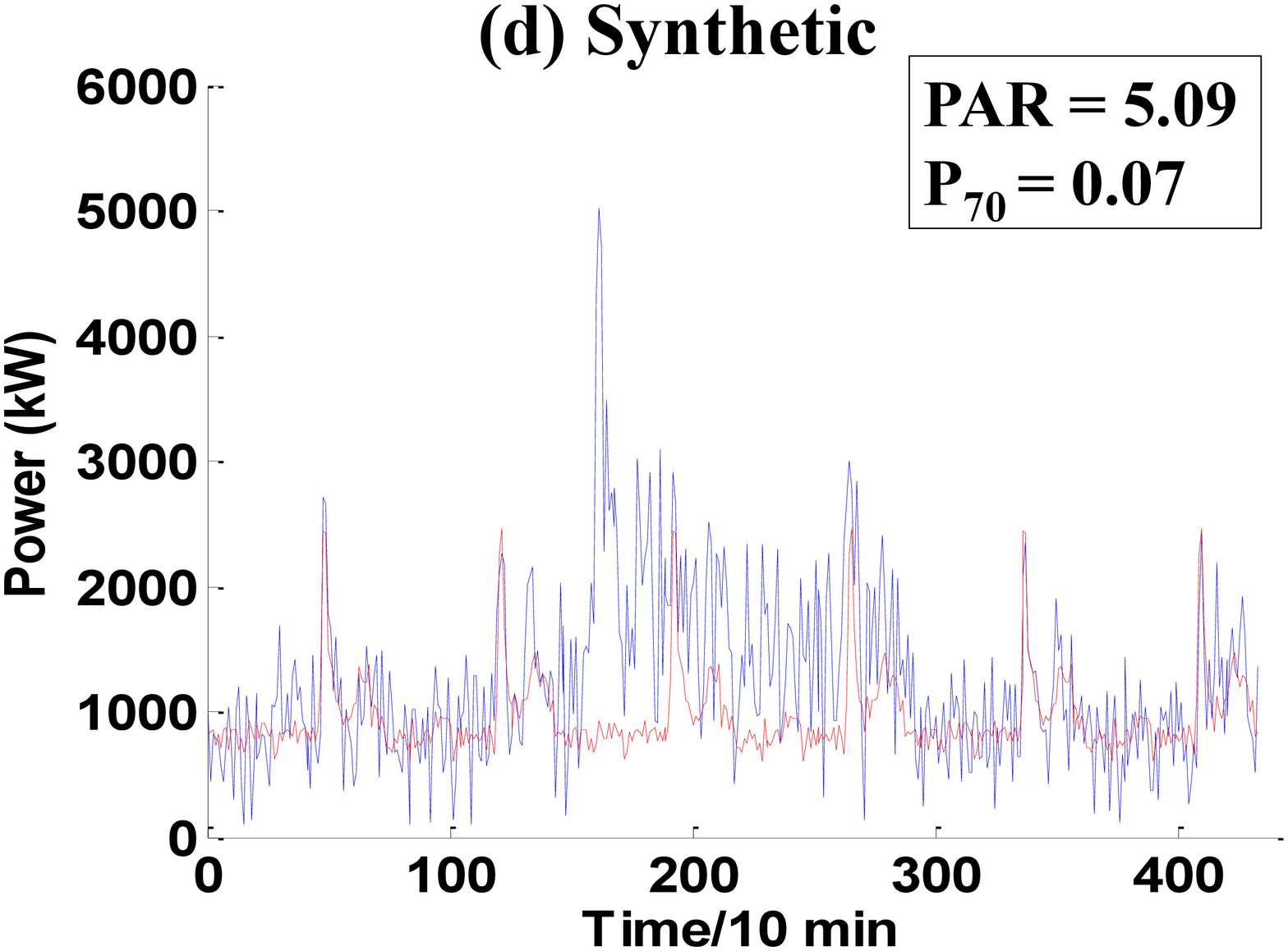}
\end{center}
\caption{\small Workloads from the 14th day to the 16th day. The blue line is the original power demand, and the red line is the time-of-day behavior of the demand.}
\label{fig:workloads}
\end{figure}
\begin{figure}[htbp]
\begin{center}
\includegraphics[width=0.23\textwidth]{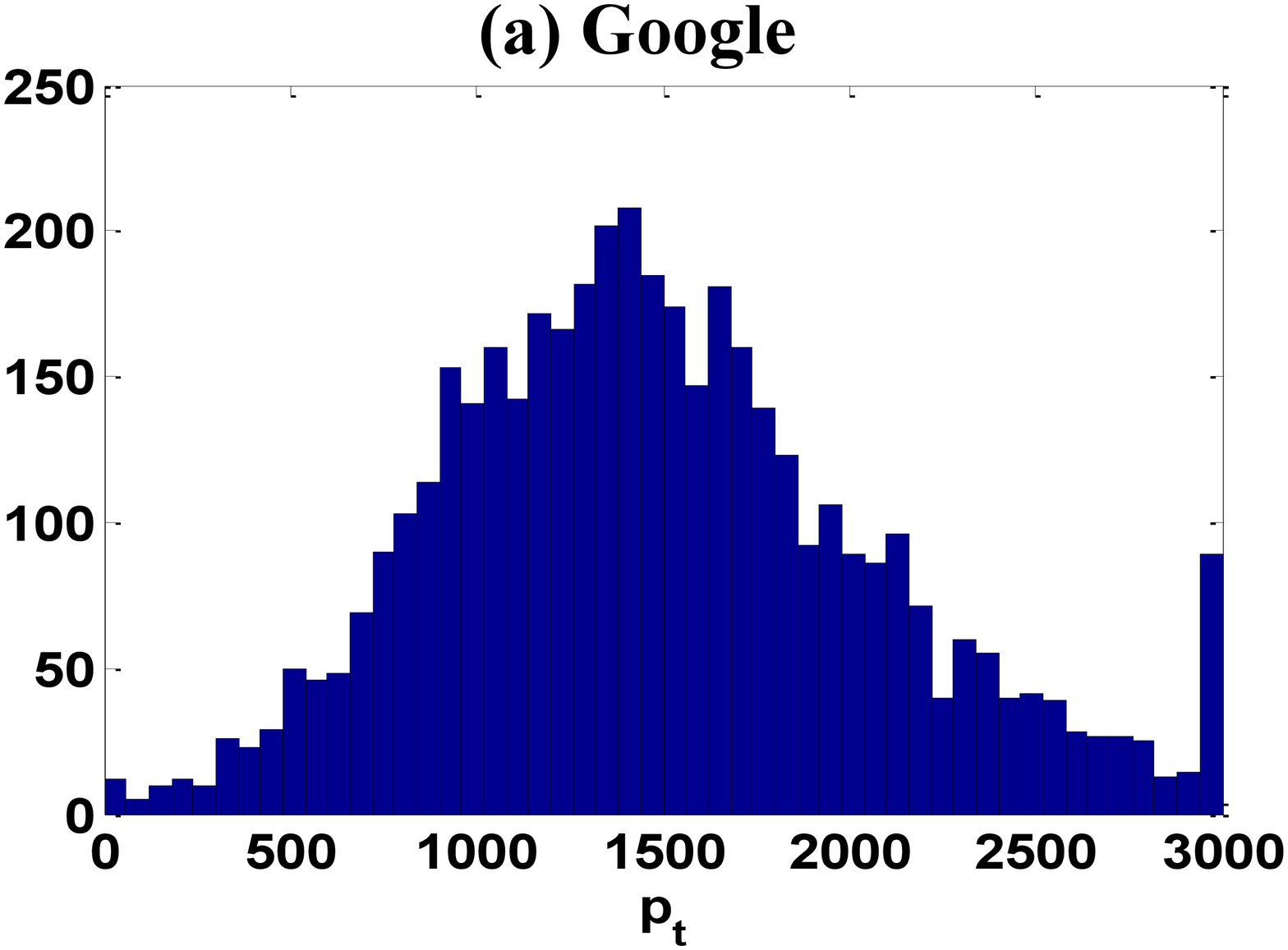}
\includegraphics[width=0.23\textwidth]{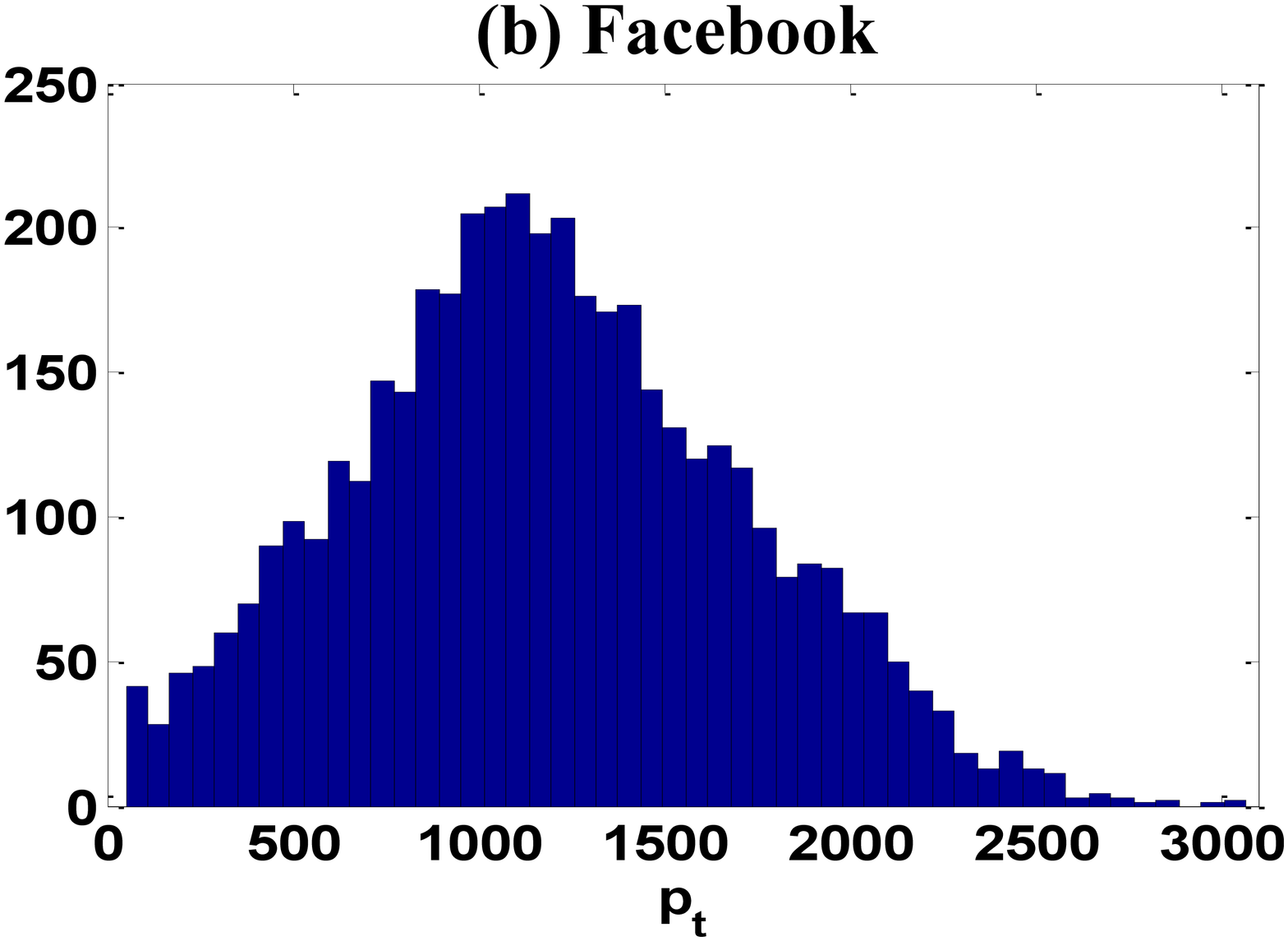}\\
\includegraphics[width=0.23\textwidth]{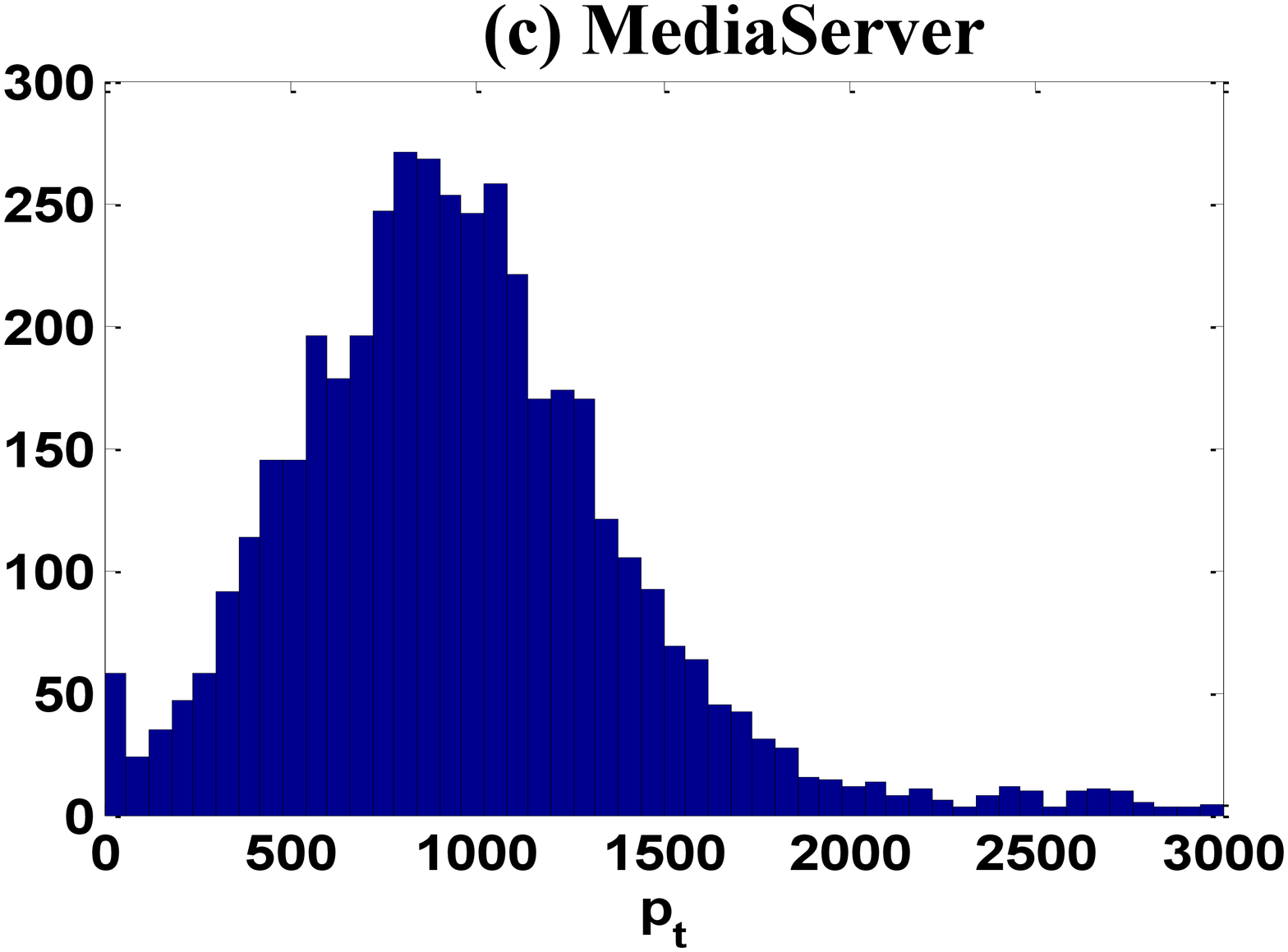}
\includegraphics[width=0.23\textwidth]{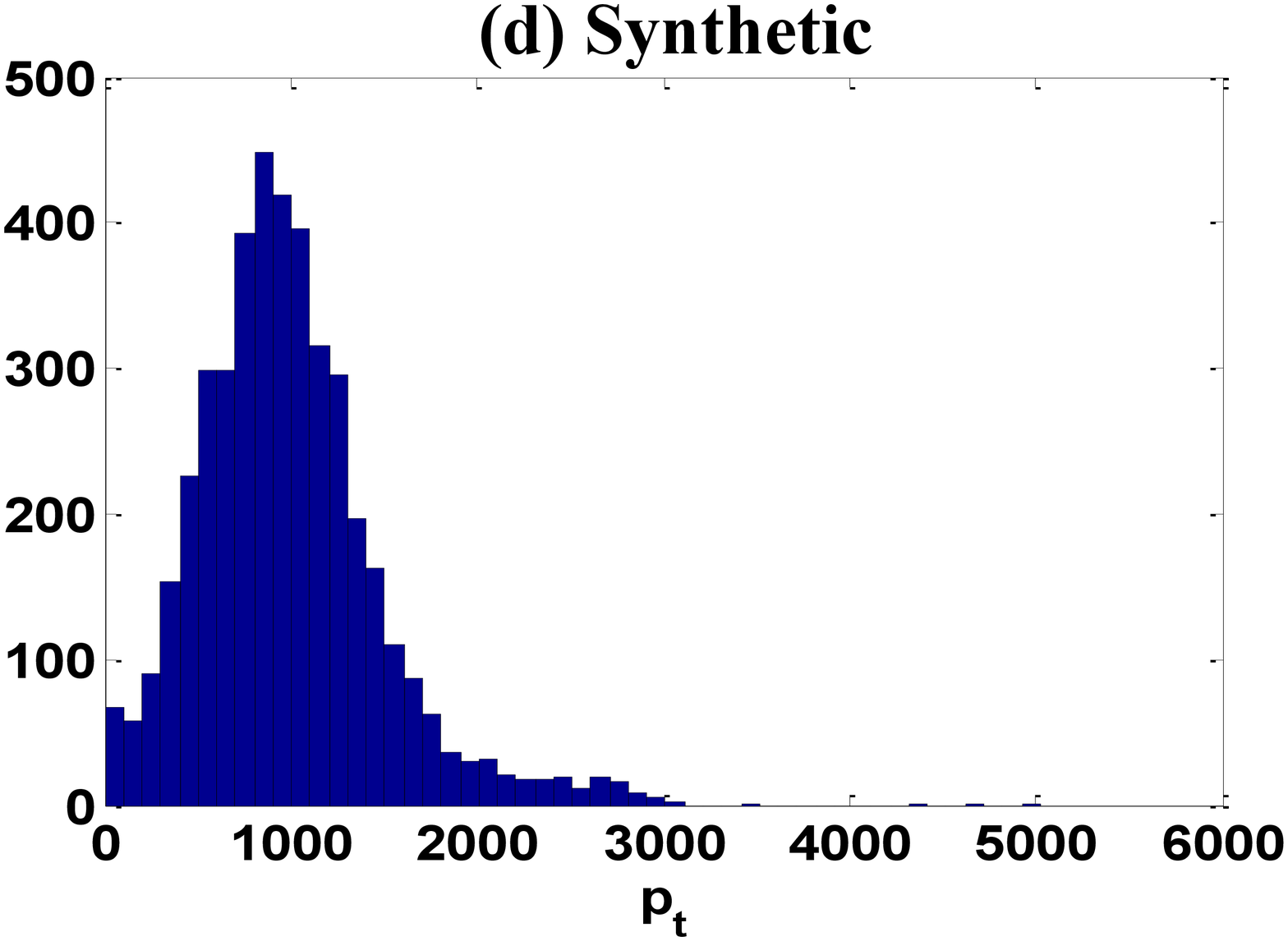}
\end{center}
\caption{Histograms of the power demands within an optimization window for our workloads.}
\label{fig:workloadHist}
\end{figure}

\vspace{-0.1in}
\subsection{Peak Pricing Schemes}
\label{sec:evalPeak}

\begin{table*}[htbp]
\scriptsize
\centering
\begin{tabular}{||l|l|l|l|l|l|l|l|l|l|l|l||}
\hline \hline

Control knobs &\multicolumn{4}{|c|}{Only dropping}              & \multicolumn{3}{|c|}{Only delaying}& \multicolumn{3}{|c|}{Dropping+Delaying} \\
\hline \hline
Workload      & OFF & ON$_{\rm MPC}$ & SDP$_{\rm Drop}$ & ON$_{\rm Drop}$ & OFF & ON$_{\rm MPC}$  & SDP$_{\rm Lin}$  & OFF & ON$_{\rm MPC}$ &SDP$_{\rm Lin}$\\ \hline
Google	      &  1.45 & -0.21      &  1.14        &  -0.29      & 10.74 & 9.92        &  2.83       & 10.74 & 9.92       &  2.83      \\ \hline
Facebook      & 13.11 & 9.26       &  12.52       &   8.95      & 16.39 & 9.96        &  11.79      & 16.97 & 9.96       &  11.79     \\ \hline
MediaServer   & 13.82 & 11.21      &  8.13        &  11.66      & 27.64 & 27.16       &  18.26      & 27.64 & 27.16      &  18.26     \\ \hline
Synthetic     & 36.56 & 30.80      &  29.86       &  32.49      & 38.42 & 32.40       &  -10.37     & 45.05 & 32.43      &   -10.37    \\ \hline
\hline
\end{tabular}
\caption{\small Cost savings ($\%$) offered by different algorithms under peak-based tariff.}
\label{tbl:peakCostSaving}
\end{table*}
\begin{figure}[htbp]
  \centering
  \includegraphics[width=0.35\textwidth]{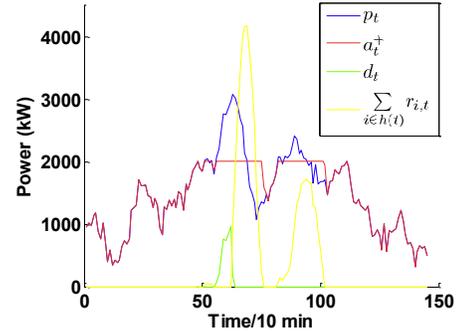}
  \caption{\small Demand modulation by OFF on Facebook (22nd day).}
\label{fig:peakModulation}
\end{figure}
Figure~\ref{fig:peakModulation} shows the results of demand modulation by OFF on Facebook (22nd day). In this example, we find that a large amount of power demand has been deferred (see yellow curve; $\sum_{i\in h(t)}r_{i,t}$ is the aggregate deferred demand from up to $\tau$ time slots before $t$) to the power "valley" right after the peak. However, dropping (green line) is still used to further shave the peak since we can only postpone the demand up to $\tau$ time slots (1 hour in our setting).

Table~\ref{tbl:peakCostSaving} presents the cost savings ($\%$) offered by different algorithms under the peak-based pricing scheme with the parameters ($\alpha, \beta$) shown in Table~\ref{tbl:params}. 
Let us first consider using only delaying. 
Deferring demand is possible only when a near-peak demand $p_t$ is followed immediately (within 1 hour) by much lower power demands, and the costs saving due to the resulting peak reduction is larger than the delay cost incurred. 
The lower and longer these succeeding low demand periods are (typically for larger PAR and smaller P$_{70}$), the better should be the cost savings achieved. Our experiment results verify these intuitions. As shown in Figure~\ref{fig:workloads}, PAR increases drastically from Google, Facebook, MediaServer, Synthetic (in that order) while P$_{70}$ decreases, and correspondingly, the cost savings improve (from $10.74\%$ for Google to $38.42\%$ for Synthetic).

Next, let us consider only using dropping. According to Lemma~\ref{lem:onlydemanddropping}, the optimal demand dropping threshold is determined by $\hat{p}_n, n=\lceil \frac{\beta}{(k_{\sf drop}-\alpha)} \rceil$ in the non-increasing array of $\{\hat{p}_t\}_{t=1}^{T}$. Since $n$ is fixed for all workloads given the pricing parameter settings, a larger P$_{70}$ implies higher probability of the optimal dropping threshold ($\theta$) being high, which means less power demand can be dropped to reduce the peak demand with lower cost savings. Among our workloads, Google has very ``wide'' peak/near-peak power values, which is the reason for the meager 1.45\% cost savings. On the other hand, Synthetic exhibits ``sharp'' and ``tall'' peaks, for which a greater cost saving (36.56\%) is possible.

Thirdly, let us consider the cost savings when both knobs are allowed: why are the cost savings very similar to those when only delaying is allowed? If a delayed unit of demand can be serviced with a small delay, the cost incurred is small compared to that for dropping. The low demand periods within 1 hour of near-peak demands in our workloads are plentiful, making dropping a rarely used knob. Only for Synthetic do we find that 
dropping demand helps improve cost saving by a non-trivial amount: from 38.42\% with only delaying to 45.05\% with both the knobs.

Finally, let us consider how our online/stochastic control algorithms perform.  ON$_{\rm MPC}$ performs well for almost all workload/control-knob combinations, except the $-0.21\%$ for Google with only dropping. Note that, in this case, there isn't much room for savings to begin with (the optimal savings are only 1.45\%. We report similar observations for ON$_{\rm Drop}$ with -0.29\% cost saving for Google (note that these are in line with the competitive ratio we found for ON$_{\rm Drop}$). SDP$_{\rm Drop}$ works well under all workloads, since the workloads exhibit strong time-of-day behavior, implying our SDP has a reliable prediction model to work with. However, we observe -10.38\% cost saving from SDP$_{\rm Delay}$ for Synthetic since our model is unable to capture its flash crowd.

{\bf Key Insights:} (i) For our workloads and parameters, delaying demand offers more benefits more than dropping, (ii) workload properties strongly affect the cost savings under peak-based tariff, and (iii) our stochastic control techniques are able to leverage workload prediction to offer near-optimal cost savings when there is no unexpected flash crowd. 
\vspace{-0.1in}
\subsection{Time-varying Prices}
\label{sec:evalTimeVarying}
Many prior papers have explored various ways to reduce operational costs of data center(s) under time-varying pricing tariff~\cite{urgaonkar-sigmetrics2011,deng2013multigreen,yao2012data} or geographically varying energy price~\cite{LiuLWLA11,RenHX12}. Since our problem formulation is general enough to incorporate time-varying pricing tariff, we also do experiments to find insights for demand modulation under such schemes.

\begin{wrapfigure}[13]{r}{0.25\textwidth}
  \centering
  \includegraphics[width=0.23\textwidth]{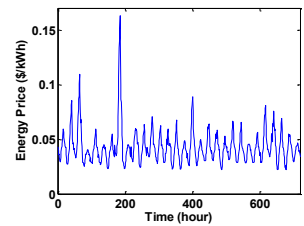}
  \caption{\small Hourly electricity price from 07/01/2012 to 07/30/2012 in zone A of National Grid U.S.~\cite{nationalgrid}}
\label{fig:price}
\end{wrapfigure}


We restrict our attention to scenarios where dropping cost is always larger than energy cost ($l_{\sf drop}(x)>\alpha_t x,\forall t$), the only control knob we can choose is deferring demand. Note that other papers have considered coincident peak pricing where this no longer holds and dropping may also be employed for cost-efficacy~\cite{liu2013}. Throughout this section, we set $k_{\sf delay}=0.01\$/kWh$ as in~\cite{LiuLWLA11}, and evaluate cost savings both for linear delay costs ($l_{\sf delay}(x,t)=k_{\sf delay}tx$) and quadratic delay costs ($l_{\sf delay}(x,t)=k_{\sf delay}t^2 x$). 
Figure~\ref{fig:price} shows the time-varying price time-series we use in our experiments, which is an example of charge imposed on commercial or industrial customers ($>2MW$) according to National Grid U.S.~\cite{nationalgrid}. The energy price $\alpha_t$ ranges from $0.022\$/kWh$ to $0.163\$/kWh$.

\begin{table}[htbp]
\scriptsize
\centering
\begin{tabular}{||l|l|l||}
\hline \hline

Control knob &\multicolumn{2}{|c|}{Only delaying}\footnote{Lots of research has been done on cost minimization under time-varying price. Here we only do experiments by OFF since our focus is on exploring cost saving potential of demand modulation under different pricing schemes.} \\ \hline \hline
Workload      &Linear delay cost & Quadratic delay cost \\ \hline
Google	      &  9.21            & 7.97          \\ \hline
Facebook      &  3.02            & 1.55             \\ \hline
MediaServer   &  7.95            & 6.45            \\ \hline
Synthetic     &  7.77            & 6.34              \\ \hline
\hline
\end{tabular}
\caption{\small Cost savings ($\%$) of different delay cost model under time-varying pricing tariff.}
\label{tbl:timeVaryCostSaving}
\end{table}

Our observations are presented in Table~\ref{tbl:timeVaryCostSaving}, and they may be summarized as follows:(i) the workload properties that we found affecting cost savings significantly for peak-based pricing appear to have a less-easy-to-explain influence on cost savings for time-varying prices. Although Google, MediaServer, and Synthetic have very different features (e.g., PAR and P$_{70}$), the cost savings achieved for them under time-varying pricing are  similar. On the other hand, Facebook demand experiences a much lower cost saving. 
(ii) Workloads that are more sensitive/less tolerable to delay, (with quadratic delay cost in our case), tend to gain less benefit out of demand modulation. 

Next, we explore the impact of the correlation between power demands and energy prices on cost savings. Intuitively, the more the two are correlated, the better cost savings we would expect. 
Our experimental results verify this intuition. When demands and prices are positively correlated (Figure~\ref{fig:PriceWorkloadCorr}(a)), we set $\alpha_t=\frac{p_t-p_{\rm min}}{p_{\rm max}-p_{\rm min}}(\alpha_{\rm max}-\alpha_{\rm min})+\alpha_{\rm min}$, and the cost saving is 11.98\%; when they negatively correlated (Figure~\ref{fig:PriceWorkloadCorr}(b)), we set $\alpha_t=\frac{p_{max}-p_t}{p_{max}-p_{min}}(\alpha_{max}-\alpha_{min})+\alpha_{min}$, and the cost saving is only 3.41\%. In both experiments, $\alpha_{\rm min}=0.022\$/kWh$, $\alpha_{\rm max}=0.183\$/kWh$.

\begin{figure}[htbp]
\begin{center}
\includegraphics[width=0.23\textwidth]{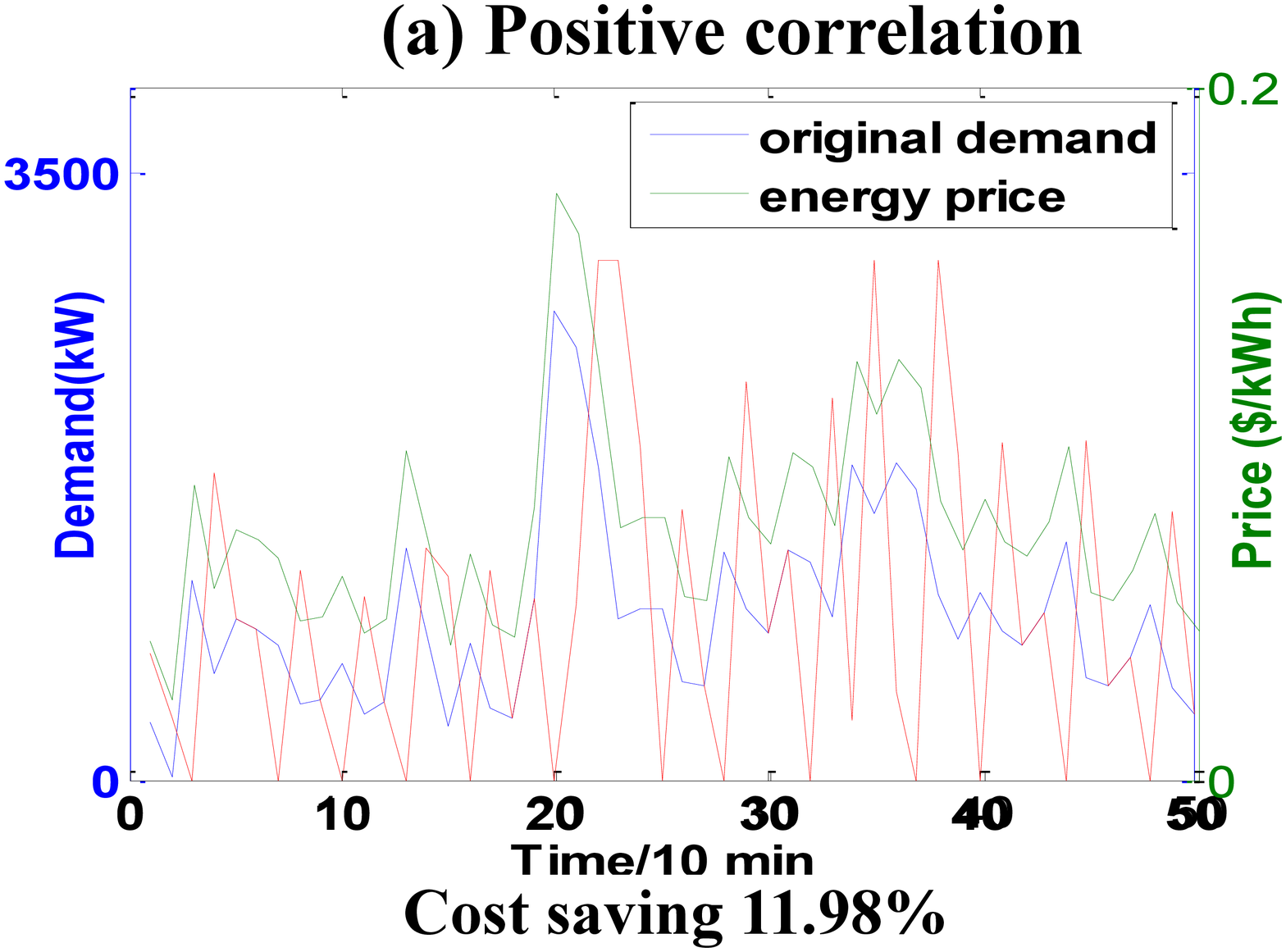}
\includegraphics[width=0.23\textwidth]{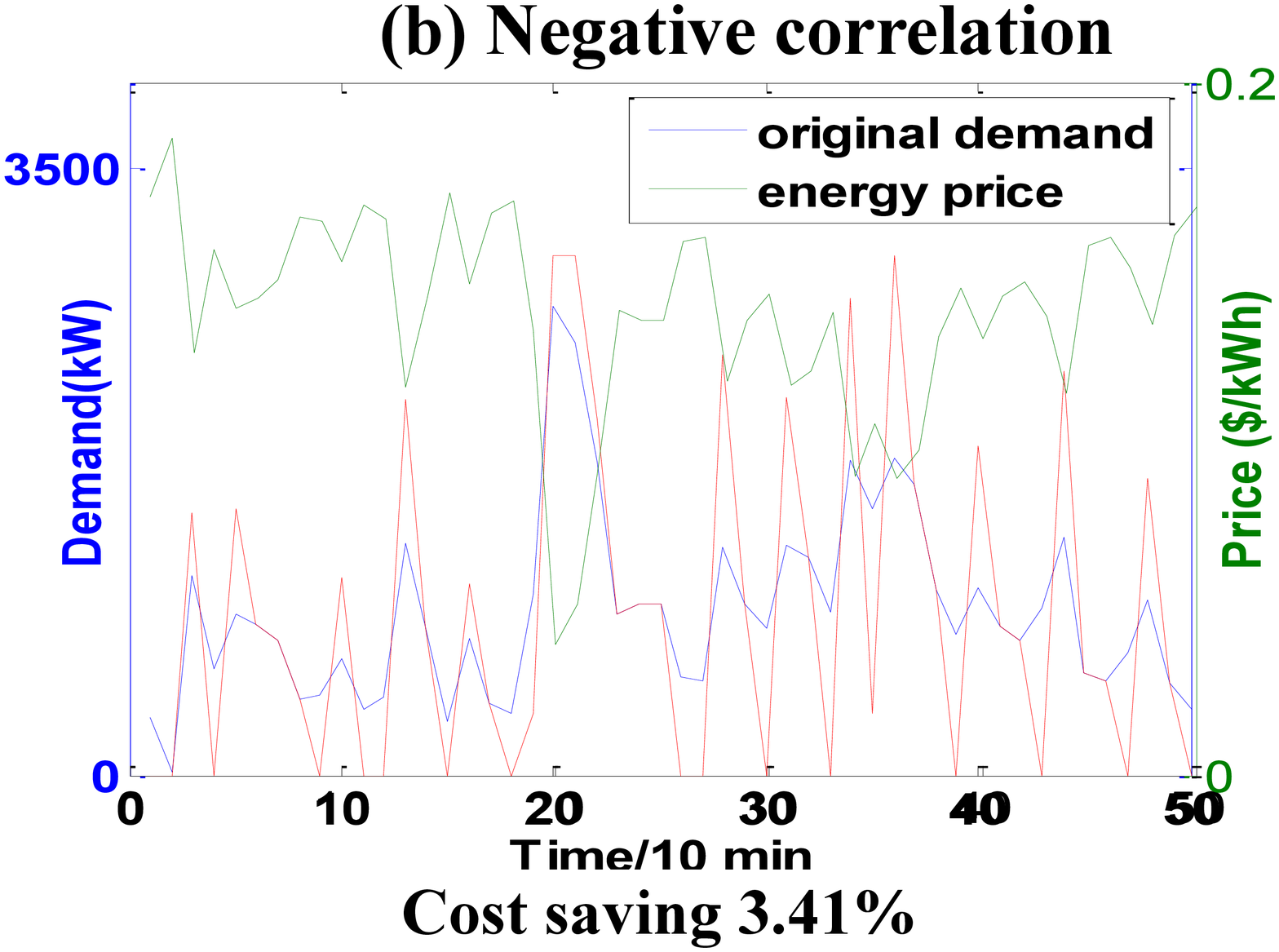}
\end{center}
\caption{\small Impact of correlation between workload and price. Red line is the modulated demand.}
\label{fig:PriceWorkloadCorr}
\end{figure}

{\bf Key Insights:} (i) workload properties (PAR, P$_{70}$, etc. appear to have a lower (or less clear) impact on cost savings compared to peak-based pricing, 
(ii) cost savings greatly depend on the energy price fluctuation and the actual delay penalty (linear vs. quadratic), and   (iii) delaying is more effective as a knob for time-varying prices when demands and prices are positively correlated.

\section{Conclusions and Future Directions}
\label{sec:conclus}


We formulated optimization problems to study how data centers might modulate their power demands for cost-effective operation given three key complex features exhibited by real-world electricity pricing schemes: (i) time-varying prices (e.g., time-of-day pricing, spot pricing, or higher energy prices during "coincident" peaks) and (ii) separate charge for peak power consumption. Our focus was on demand modulation at the granularity of an entire data center or a large part of it, and our work was complementary to a significant body of emergent work in this space (including research threads that have looked at supply-side techniques and demand-side techniques based on additional energy generation or storage sources). For computational tractability reasons, we worked with a fluid model for power demands which we imagined could be modulated using two abstract knobs of demand dropping and demand delaying (each with its associated penalties or costs). For data centers with predictable workloads, we devised a stochastic dynamic program (SDP) that could leverage such predictive models. We also devised approximations (SDP$_{\rm Lin}$ and SDP$_{\rm Drop}$) for our SDP that might be useful when the SDP is computationally infeasible. 
We also devise fully online algorithms (ON$_{\rm Drop}$ and ON$_{\rm MPC}$) that might be useful for scenarios with poor power demand or utility price predictability. For ON$_{\rm Drop}$, we proved a competitive ratio of $2-\frac{1}{n}$. Finally, using empirical evaluation with both real-world and synthetic power demands and real-world prices, we demonstrated the efficacy of our techniques: (i) demand delaying is more effective than demand dropping regarding to peak shaving (e.g., 10.74\% cost saving with only delaying vs. 1.45\% with only dropping for Google workload) and (ii) workloads tend to have different cost saving potential under various electricity tariffs (e.g., 16.97\% cost saving under peak-based tariff vs. 1.55\% under time-varying pricing tariff for Facebook workload).

{
  \bibliographystyle{plain}
  \bibliography{perf}
}

\newpage
\section*{Appendix}
\label{sec:app}

\subsection{Proof for Lemma ~\ref{lem:DelayNotDrop}}
\label{sec:applp}

\begin{IEEEproof}
Let us assume an optimal solution $\mathcal{A}$ in which there exists some control window $t_1$ (where $1 \le t_1 < T$) which violates the condition of our lemma. Therefore, of the demand $(r_{t,t+1} = p_t-a_{t,t}-d_{t,t})$ postponed (i.e., unmet) during $t_1$, there exists some portion $\nu$ (where $0 < \nu \le a_{t,t}$), that is dropped during the control windows $[t_1+1, t_1+\tau]$. Let us focus on a portion $\nu_{t_2}>0$ of $\nu$ that is dropped during the control window $t_2$ (where $t_1 < t_2 \le t_1+\tau$). Delaying $\nu_{t_2}$ over the period $(t_2-t_1)$ and dropping it during $t_2$ contributes the following to the objective: $l_{drop}(\nu_{t_2}) + l_{delay}(\nu_{t_2},t_2-t_1)$. Let $C(\mathcal{A})$ denote the objective/cost offered by $\mathcal{A}$.

Let us now compare $C(\mathcal{A})$ with the cost offered by an alternate solution $\mathcal{A'}$ which drops $\nu_{t_2}$ during $t_1$ instead of delaying it. The two algorithms' treatment of all other power demands (i.e., except for that for $\nu_{t_2}$) is exactly identical. This leads us to the following comparison of the different components of $C(\mathcal{A})$ and $C(\mathcal{A'})$:
\begin{itemize}
\item {\em Energy costs}: Since both $\mathcal{A}$ and $\mathcal{A'}$ admit the same overall energy, they have identical energy costs.
\item {\em Peak power cost}: $\mathcal{A'}$ drops $\nu_{t_2}$ {\em before} $\mathcal{A}$ does.  Consequently, the peak power consumption of $\mathcal{A'}$ cannot be worse (i.e., greater) than that of $\mathcal{A}$.
\item {\em Loss due to delaying or dropping demand}: Finally, whereas $\mathcal{A}$ incurs a cost of $l_{delay}(\nu_{t_2}, t_2-t_1) + l_{drop}(\nu_{t_2})$ for its treatment of $\nu_{t_2}$, $\mathcal{A'}$ incurs a smaller cost of $l_{drop}\nu_{t_2}$.
\end{itemize}
Combining the above, we find that $C(\mathcal{A'}) < C(\mathcal{A})$, which contradicts our assumption that $\mathcal{A}$ was optimal.
\end{IEEEproof}

\subsection{Proof for Lemma ~\ref{lem:SDPoptStructure}}
\label{sec:thresholdStructure}
\begin{IEEEproof}
If the demands $p_t$ are independent across $t$, and we define $\mu_t=\frac{a_t}{p_t},\mu_t \in [0,1]$, SDP$_{\rm Drop}$ becomes the following:
\begin{equation*}
\begin{aligned}
V_T(y_T)=&\min_{a_T,d_T} \mathbb{E} \{\alpha_T a_T + l_{\sf drop}(d_T) + \beta y_{T+1}\} \\
        =&\min_{\mu_T \in [0,1]} \mathbb{E} \{\alpha_T p_T\mu_T + l_{\sf drop}(p_T(1-\mu_T)) \\
         &+ \beta \max\{y_T,p_T\mu_T\}\}
\end{aligned}
\end{equation*}
\begin{equation*}
\begin{aligned}
V_t(y_t)=&\min_{a_t,d_t} \mathbb{E} \{\alpha_t a_t + l_{\sf drop}(d_t) + V_{t+1}(y_{t+1})\} \\
        =&\min_{\mu_t \in [0,1]} \mathbb{E} \{\alpha_t p_t\mu_t + l_{\sf drop}(p_t(1-\mu_t)) \\
         &+ V_{t+1}(\max\{y_t,p_t\mu_t\})\}, ~~t=1,...,T-1
\end{aligned}
\end{equation*}
Define $\mathcal{G}_t(\mu_t)=\mathbb{E} \{\alpha_t p_t\mu_t  +l_{\sf drop}p_t+V_{t+1}(\max\{y_t,p_t\mu_t\})\}$, and suppose $\mathcal{G}_t$ is convex, which will be proved later, and $\mathcal{G}_t$ has an unconstrained minimum with respect to $y_t$, denoted by $\phi_t$: $\phi_t=\arg \min_{\mu_t \in \mathcal{R}^+}\mathcal{G}_t(\mu_t)$. Then, in view of the constraint $0 \le \mu_t \le 1$ and the convexity of $\mathcal{G}_t$, it is easily seen that an optimal policy can determined by the sequence of scalars $\{\phi_1,\phi_2,...,\phi_T\}$ and has the form
\begin{equation*}
\begin{aligned}
\mu_t^*(y_t)=
\begin{cases}
    \phi_t, ~~&{\rm if} \phi_t \le 1\\
    1, &{\rm if} \phi_t > 1
\end{cases}
\end{aligned}
\end{equation*}
For SDP$_{\rm Drop}$, we have
\begin{equation*}
\begin{aligned}
(a_t^*,d_t^*)=
\begin{cases}
   (\phi_t p_t, p_t-\phi_t p_t), ~~&{\rm if} \phi_t \le 1\\
    (p_t, 0), &{\rm if} \phi_t > 1
\end{cases}
\end{aligned}
\end{equation*}

Now we will prove the convexity of the cost-to-go functions $V_t$ (and hence $\mathcal{G}_t$), so that the minimizing scalars $\phi_t$ exist. We use induction to prove the convexity.

For the base case, as shown above, an optimal policy at time $T$ is given by
\begin{equation*}
\begin{aligned}
\mu_T^*(y_T)=
\begin{cases}
    \phi_T, ~~&{\rm if} \phi_T \le 1\\
    1, &{\rm if} \phi_T > 1
\end{cases}
\end{aligned}
\end{equation*}
Furthermore, by plugging $\mu_T*$ back into $V_T$, we have
\begin{equation*}
\begin{aligned}
V_T(y_T)=
\begin{cases}
&\mathbb{E} \{\alpha_T p_T\phi_T + l_{\sf drop}(p_T(1-\phi_T)) \\
&~~~~~~~+ \beta \max\{y_T,p_T\phi_T\}\}, {\rm if} \phi_T \le 1\\
&\mathbb{E} \{\alpha_T p_T + \beta \max\{y_T,p_T\}\}, {\rm if} \phi_T > 1
\end{cases}
\end{aligned}
\end{equation*}
which is a convex function since $\max\{y_T,.\}$ is convex function of $y_T$. This argument can be repeated to show that for all $t=T-1,...,1$, if $V_{t+1}$ is convex, then we have
\begin{equation*}
\begin{aligned}
V_t(y_t)=
\begin{cases}
&\mathbb{E} \{\alpha_t p_t\phi_t + l_{\sf drop}(p_t(1-\phi_t)) \\
&~~~~~~~~~~+ V_{t+1}(\max\{y_t,p_t\phi_t\})\}, {\rm if} \phi_t \le 1\\
&\mathbb{E} \{\alpha_T p_T + V_{t+1}(\max\{y_t,p_t\phi_t\})\}, {\rm if} \phi_t > 1
\end{cases}
\end{aligned}
\end{equation*}
is also convex function of $y_t$. By induction, $V_t$ is convex for all $t=1,...,T$. Thus, the optimality of the above policy is guaranteed.
\end{IEEEproof}

\subsection{Proof for Lemma ~\ref{lem:onlydemanddropping}}
\label{sec:lpwithonlydemanddropping}

\begin{IEEEproof}
Without demand delaying, $r_{i,t}=0, i \in h(t) \forall t$. Therefore, we have $d_t=p_t-x_t, \forall t$. The objective becomes:
\begin{equation*}
\begin{aligned}
&\min_{a_t,y_{\rm max}} \sum_{t} \{\alpha a_t + k_{\sf drop}(p_t-a_t)\} + \beta y_{\rm max}\\
=&\min_{a_t,y_{\rm max}} \sum_{t} (\alpha-k_{\sf drop}) a_t + \beta y_{\rm max}
\end{aligned}
\end{equation*}
Since $y_{\rm max} \geq a_t, \forall t$, $y_{\rm max}$ will be equal to the largest $a_t$ in the optimal solution. For $p_t \ge y_{max}$,  we have $a_t=y_{\rm max}$, whereas for $p_t< y_{\rm max}$,  we have $a_t=p_t$. We denote this as: $a_t=p_t-(p_t-y_{\rm max})I_{\{p_t \geq y_{\rm max}\}}$, where $I_{.}$ is the standard indicator function. The optimal value is:
\begin{equation*}
\begin{aligned}
&\min_{y_{\rm max}} (\alpha-k_{\sf drop})\sum_{t} \left(p_t-(p_t-y_{\rm max})I_{\{p_t \geq y_{\rm max}\}}\right) + \beta y_{\rm max}\\
=&\min_{y_{\rm max}} (k_{\sf drop}-\alpha) \sum_{t} (p_t-y_{\rm max})I_{\{p_t \geq y_{\rm max}\}} + \beta y_{\rm max}
\end{aligned}
\end{equation*}
We denote as $V(y_{\rm max})$ the expression $(k_{\sf drop}-\alpha) \sum_{t} (p_t-y_{\rm max})I_{\{p_t \geq y_{\rm max}\}} + \beta y_{\rm max}$, and prove that $V(y_{\rm max})$ is a convex piecewise-linear function of $y_{\rm max}$. We sort the array $\{p_t\}$ into $\{\hat{p}_t\}$ such that $\hat{p}_1 \geq \hat{p}_2 \geq ... \geq \hat{p}_T$. Observing that multiple time slots might have the same power demand value, we denote $p_i^{k_i}, 1 \leq i \leq T'$ ($T'$ is the number of distinct power demand values in $\{\hat{p}_t\}$) as the $i^{\rm th}$ largest value in $\{\hat{p}_t\}$ with $k_i$ time slots having the same value $\hat{p}_i$. Then the following holds:
\begin{itemize}
\item If $y_{\rm max} > p_1^{k_1}$, none of the power demand values are larger than $y_{\rm max}$. This implies that $V(y_{\rm max})=\beta y_{\rm max}$.
\item If $p_1^{k_1} \geq y_{\rm max} > p_2^{k_2}$,  $k_1$ power demand values are larger than $y_{\rm max}$. This implies that $V(y_{\rm max})=[\beta-(k_{\sf drop}-\alpha) k_1] y_{\rm max}+(k_{\sf drop}-\alpha) k_1 p_1^{k_1}$.
\item ...
\item Finally, if $y_{\rm max} \leq p_{T'}^{k_{T'}}$, all power demand values are larger than $y_{\rm max}$. This implies that $V(y_{\rm max})=(\beta-(k_{\sf drop}-\alpha) T) y_{\rm max}+(k_{\sf drop}-\alpha)\sum_{t} p_t$.
\end{itemize}
The slope of $V(y_{\rm max})$ does not increase as $y_{\rm max}$ decreases since $k_{\sf drop}-\alpha > 0$. Therefore, $V(y_{\rm max})$ (and hence the objective function of OFF with only demand dropping) is a convex piecewise-linear function of $y_{\rm max}$. Finally, for $n=\lceil \frac{\beta}{(k_{\sf drop}-\alpha)} \rceil$, the optimal demand dropping threshold $\theta$ will be $\hat{p}_n$. 
 This optimal threshold can be found in $O(T\cdot \log T)$, the time needed for sorting the array
 $\{p_t\}$. 
\end{IEEEproof}


\subsection{Proof for Theorem~\ref{thm:cr}}
\label{sec:ONdropCR}
\begin{IEEEproof}
From the algorithm description of ON$_{\rm Drop}$, we can obtain the following properties:

{\bf Property 1:} The demand dropping threshold of ON$_{\rm Drop}$ keeps non-decreasing and is guaranteed to converge to the optimal threshold after all demands in the optimization horizon are observed.

{\bf Property 2:} The demand dropping threshold of ON$_{\rm Drop}$ never exceeds the optimal threshold. Furthermore, if we denote as $a_t^d$ the admitted demand by ON$_{\rm Drop}$ at time $t$, and $a_t^m$ the admitted demand by OFF at time $t$ when only the demand values in the first $m$ time slots are observed, $1 \le m \le T$, then $a_t^d \le a_t^m$.

It is very easy to verify the above properties by the algorithm details of ON$_{\rm Drop}$ and Lemma~\ref{lem:onlydemanddropping}. See Figure~\ref{fig:CR} for an illustration of the properties and how ON$_{\rm Drop}$ works.
\begin{figure}[htbp]
  \centering
  \includegraphics[width=0.40\textwidth]{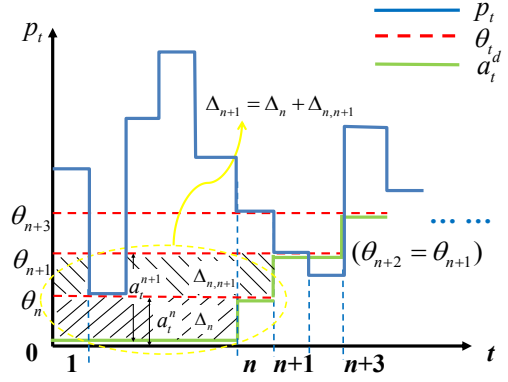}
  \caption{\small An illustration of ON$_{\rm Drop}$. $p_t$ is the original power demand; $\theta_t$ is the optimal demand dropping threshold when only the demand values in the first $t$ time slots are observed; $a_t^d$ is the admitted demand by ON$_{\rm Drop}$.}
\label{fig:CR}
\end{figure}
According to {\bf Property 1} and {\bf 2}, the total costs of ON$_{\rm Drop}$ and OFF are:
\begin{equation*}
\begin{aligned}
{\rm Cost_{OFF}}&=\beta \theta_T + \alpha \sum_{t=1}^{T} a_t^T + k_{\sf drop}\sum_{t=1}^{T} (p_t-a_t^T)\\
                &=\beta \theta_T + k_{\sf drop}\sum_{t=1}^{T} p_t - (k_{\sf drop}-\alpha) \sum_{t=1}^{T} a_t^T\\
{\rm Cost_{ON_{Drop}}}&=\beta \theta_T + \alpha \sum_{t=1}^{T} a_t^d + k_{\sf drop}\sum_{t=1}^{T} (p_t-a_t^d)\\
                      &=\beta \theta_T + k_{\sf drop}\sum_{t=1}^{T} p_t - (k_{\sf drop}-\alpha) \sum_{t=1}^{T} a_t^d
\end{aligned}
\end{equation*}

We define competitive ratio CR of ON$_{\rm Drop}$ as the upper bound of $\frac{{\rm Cost_{ON_{Drop}}}}{{\rm Cost_{OFF}}}$ under all possible workload scenarios, which is:
\begin{equation*}
{\rm CR}=\sup_{\{p_t\}_{t=1}^{T}}\frac{{\rm Cost_{ON_{Drop}}}}{{\rm Cost_{OFF}}}
\end{equation*}

Note that the only difference between the denominator and numerator is $\sum_{t=1}^{T} a_t^d$ and $\sum_{t=1}^{T} a_t^T$. From {\bf Property 2} we know that $\sum_{t=1}^{T} a_t^T \le \sum_{t=1}^{T} a_t^d$, so $\frac{{\rm Cost_{ON_{Drop}}}}{{\rm Cost_{OFF}}} \ge 1$. However, we can still find a bound $\sum_{t=1}^{T} a_t^T - \sum_{t=1}^{T} a_t^d \le (n-1)\theta_T$, $n=\lceil \frac{\beta}{(k_{\sf drop}-\alpha)} \rceil$ to make sure that CR will not go to infinity. We will prove this bound later. Now with this bound we have:
\begin{equation*}
\begin{aligned}
&\frac{{\rm Cost_{ON_{Drop}}}}{{\rm Cost_{OFF}}}\\
=&\frac{\beta \theta_T + k_{\sf drop}\sum_{t=1}^{T} p_t - (k_{\sf drop}-\alpha) \sum_{t=1}^{T} a_t^d}{\beta \theta_T + k_{\sf drop}\sum_{t=1}^{T} p_t - (k_{\sf drop}-\alpha) \sum_{t=1}^{T} a_t^T}\\
\le &\frac{\beta \theta_T + k_{\sf drop}\sum_{t=1}^{T} p_t - (k_{\sf drop}-\alpha) (\sum_{t=1}^{T} a_t^T-(n-1)\theta_T)}{\beta \theta_T + k_{\sf drop}\sum_{t=1}^{T} p_t - (k_{\sf drop}-\alpha) \sum_{t=1}^{T} a_t^T}\\
=&\frac{\beta \theta_T + k_{\sf drop}\sum_{t=1}^{T} p_t - (k_{\sf drop}-\alpha)\sum_{t=1}^{T} a_t^T + \beta\theta_T-(k_{\sf drop}-\alpha)\theta_T}{\beta \theta_T + k_{\sf drop}\sum_{t=1}^{T} p_t - (k_{\sf drop}-\alpha) \sum_{t=1}^{T} a_t^T} \\
&~~~{\rm (Since}~n=\lceil \frac{\beta}{(k_{\sf drop}-\alpha)} \rceil{\rm )}\\
=& 1 + \frac{\beta\theta_T-(k_{\sf drop}-\alpha)\theta_T}{\beta \theta_T + k_{\sf drop}\sum_{t=1}^{T} p_t - (k_{\sf drop}-\alpha) \sum_{t=1}^{T} a_t^T}\\
=& 1 + \frac{1-\frac{1}{n}}{1+\frac{k_{\sf drop}\sum_{t=1}^{T} p_t - (k_{\sf drop}-\alpha) \sum_{t=1}^{T} a_t^T}{\beta\theta_T}}\\
<& 1 + 1 -  \frac{1}{n}~~{\rm (Since}~p_t \ge a_t^T{\rm )}\\
=& 2-\frac{1}{n}
\end{aligned}
\end{equation*}

Next we prove $\sum_{t=1}^{T} a_t^T - \sum_{t=1}^{T} a_t^d \le (n-1)\theta_T \le n\theta_T$, $n=\lceil \frac{\beta}{(k_{\sf drop}-\alpha)} \rceil$ by induction. For simplicity, we define $\Delta_m=\sum_{t=1}^{m} a_t^m - \sum_{t=1}^{m} a_t^d$ and $\Delta_{m,m+1}=\Delta_{m+1}-\Delta_m$. Then our inductive hypothesis is $\Delta_m \le (n-1)\theta_m$ for $n \le m \le T$. Since the demand dropping threshold of ON$_{\rm Drop}$ is 0 for the first $n-1$ time slots, which means $a_t^d=0$ for $1 \le t \le n-1$, we choose $m=n$ as the base case.

For the base case, $m=n$, the optimal threshold is $\theta_n$ according to Lemma~\ref{lem:onlydemanddropping}, which is the $n^{\rm th}$ largest power value in the first $n$ time slots. Clearly $a_t^n=\theta_n$ for $1 \le t \le n$. For ON$_{\rm Drop}$, $a_t^d=0$ for $1 \le t \le n-1$ and $a_n^d=\theta_n$. See Figure~\ref{fig:CR} for an illustration of the base case. Therefore, $\Delta_n=\sum_{t=1}^{n} a_t^n - \sum_{t=1}^{n} a_t^d=(n-1)\theta_n$, and the inductive hypothesis holds for the base case.

Then suppose the inductive hypothesis holds for $n \le m \le i$, $i \le T-1$, which means $\Delta_m \le (n-1)\theta_m$ holds for $n \le m \le i$. By Lemma~\ref{lem:onlydemanddropping}, we know that there are at most $(n-1)$ power values that are strictly larger than $\theta_i$ (otherwise $\theta_i$ will not be the $n^{\rm th}$ largest power value in the first $i$ time slots).

When $m=i+1$, the optimal threshold becomes $\theta_{i+1}$ when only the first $i+1$ time slots are considered. If $\theta_{i+1}=\theta_i$ (only when $p_{i+1} \le \theta_i$), both the optimal solution and ON$_{\rm Drop}$ will admit $p_{i+1}$ and drop 0 demand. In that case $\Delta_{i+1}=\Delta_i \le (n-1)\theta_i = (n-1)\theta_{i+1}$, and the inductive hypothesis holds. If $\theta_{i+1}>\theta_i$ (i.e., $p_{i+1} > \theta_i$), we still have $a_{i+1}^{i+1}=a_{i+1}^d$ by Lemma~\ref{lem:onlydemanddropping}. The non-zero area between $\theta_i$ and $\theta_{i+1}$ is $\Delta_{i,i+1}$, and $\Delta_{i,i+1} \le (n-1)(\theta_{i+1}-\theta_i)$ since at most $(n-1)$ power values are strictly larger than $\theta_i$. Therefore $\Delta_{i+1}=\Delta_{i}+\Delta_{i,i+1} \le (n-1)\theta_i + (n-1)(\theta_{i+1}-\theta_i)=(n-1)\theta_{i+1}$ and the inductive hypothesis holds.

Now we can conclude that $\Delta_m \le (n-1)\theta_m$ for $n \le m \le T$, which completes our proof for the competitive ratio of ON$_{\rm Drop}$.
\end{IEEEproof}


\end{document}